\newtheorem{theorem}{Theorem}
\newtheorem{corollary}[theorem]{Corollary}
\newtheorem{definition}[theorem]{Definition}
\newtheorem{example}[theorem]{Example}
\newtheorem{proposition}[theorem]{Proposition}
\newtheorem{remark}[theorem]{Remark}
\newcommand{\E}{\mathbb{E}}
\newcommand{\argmin}{\text{argmin}}
\newcommand{\mc}{\mathcal}
\title{\LARGE \bf
Risk-Averse Equilibrium Analysis and Computation
}
\author{Ilia Shilov, H\'el\`ene Le Cadre, and Ana Busic
\thanks{I. Shilov and A. Busic are with INRIA-ENS Paris, France
        {\tt\small \{ilia.shilov, ana.busic\}@inria.fr}}%
\thanks{H. Le Cadre is with VITO/EnergyVille, Thor Park, Belgium
        {\tt\small helene.lecadre@energyville.be}}%
}
\begin{document}

\maketitle
\thispagestyle{plain}
\pagestyle{plain}

\begin{abstract}
We consider two market designs for a network of prosumers, trading energy: (i) a centralized design which acts as a benchmark, and (ii) a peer-to-peer market design. High renewable energy penetration requires that the energy market design properly handles uncertainty. To that purpose, we consider risk neutral models for market designs (i), (ii), and their risk-averse interpretations in which prosumers are endowed with coherent risk measures reflecting heterogeneity in their risk attitudes. We characterize analytically risk-neutral and risk-averse equilibrium in terms of existence and uniqueness, relying on Generalized Nash Equilibrium and Variational Equilibrium as solution concepts. To hedge their risk towards uncertainty and complete the market, prosumers can trade financial contracts. We provide closed form characterisations of the risk-adjusted probabilities under different market regimes and a distributed algorithm for risk trading mechanism relying on the Generalized potential game structure of the problem. The impact of risk heterogeneity and financial contracts on the prosumers' expected costs are analysed numerically in a three node network and the IEEE 14-bus network.
\end{abstract}


\section{Introduction}

The increasing amount of Distributed Energy Resources (DERs), which have recently been integrated in power systems, and the more proactive role of consumers have transformed the classical centralized power system operation by introducing more uncertainty and decentralization in the decisions. Following this trend, electricity markets are starting to restructure, from a centralized market design in which all the operations were managed by a central market operator, to more decentralized designs involving local energy communities which can trade energy by the intermediate of the central market operator or in a peer-to-peer (p2p) setting \cite{lecadre, moret, moret_pinson, tushar, wang}. P2p energy trading is a novel paradigm, where prosumers providing their own energy from DERs such as solar panels, wind turbines, combined heat and power (CHP), gas boilers, storage technologies, demand response
mechanisms \cite{busic}, etc., exchange energy with one another. The development of p2p energy trading has significant potential that can benefit end users, both in terms of revenue generation and in terms of reducing the cost of electricity, as well as reducing grid dependence \cite{tushar}.

\subsection{Related work}
Consideration of a novel market organization allowing p2p energy trading requires a careful quantitative comparison with the existing centralized design which is currently implemented in most of the existing pool-based  markets. The economic dispatch of a local energy community under different structures of communications is analysed using consensus based approaches in \cite{moret, moret_pinson}. Game theoretic approaches integrating the prosumers' strategic behaviors are considered in \cite{tushar, wang} through auctions and to quantify the efficiency loss in \cite{lecadre}. To analyse the market in presence of shared coupling constraints, we employ Generalized Nash Equilibrium (GNE) as solution concept \cite{harker, kulkarni, yin}, and a refinement of it, called Variational Equilibria (VE), assuming the shadow variables associated with the shared coupling constraints are aligned among the agents \cite{kulkarni, rosen}. 

Inclusion of different risk perceptions of the agents in the energy trading model calls for risk-augmented market formulations in which agents are endowed with risk measures in their utility functions. Large literature exists on the impact of risk on agent decisions in electricity markets \cite{abada, hoschle, philpott, gerard, moret, oum, ralph}. Additionally, heterogeneous description of uncertainties on agent assets make the market incomplete \cite{moret}. In order to hedge their risk towards uncertainties and complete the market, electricity market participants can choose among several financial products and derivatives as reviewed in \cite{oum}. 

We consider risk-adjusted market with heterogeneous risk attitudes of the agents, who are willing to maximize their risk-adjusted utility. As argued in Philpott et al. in \cite{philpott}, if agents have different views on future outcomes, then the social welfare optimization might not correspond to the competitive equilibrium solution. This raises the question on how one might complete the market with suitable financial instruments to enable a welfare maximizing competitive equilibrium. We include financial contracts as instruments to reduce the effect of heterogeneous risk attitudes on the outcome of the risk adjusted market and analyse the regimes of the market with respect to the exogenous and endogenous contracts' prices.


\subsection{Contributions}
Heterogeneous risk-adjusted market is comprehensively considered in \cite{moret}. Moret et al. address the definition of fairness and the impact of risk in a one settlement two-stage p2p market. They provide a model of risk hedging mechanism via financial contracts and analyse their impact on fairness and agents' payments. In this paper, we extend the given contract mechanism, providing a distributed implementation of the mechanism. Furthermore, relying on game-theoretic solution concepts such as GNE, VE, we characterise the p2p market outcomes, deriving non trivial results on existence and uniqueness of market equilibrium. 

The difference between centralized and p2p markets is extensively addressed in \cite{le cadre}, where authors capture the imperfections in information of the agents' caused by p2p organization by employing GNE and VE. In this paper, we extend the quantification of efficiency loss considering a risk-adjusted p2p market. 

In \cite{gerard}, Gerard et al. employ coherent risk measures. They analyse risk-adjusted markets and evaluate the impact of risk-hedging contracts on the market efficiency. They address the question of uniqueness of risk-averse equilibria and give insight on some equivalences between social planner problems and equilibrium problems. Using the same risk-averse setting, we extend their model by considering the strategic behaviors of the prosumers. Furthermore, we provide a new direction for the distributed implementation of financial contracts and risk-averse equilibrium computation, allowed by the Generalized Potential Game structure of our problem.

The organization of the rest of this paper is as follow: in Sections \ref{statement} and \ref{agents}, we introduce our model. In Section \ref{two market designs}, we analyse two market designs: risk-neutral in Subsection \ref{risk neutral}, where we study the difference between nodal prices in centralized and p2p risk neutral cases, and risk-adjusted in Subsection \ref{risk averse} in which we compute the agents' strategies in a risk-adjusted electricity market. In Section \ref{equilibrium problems} we analyse equilibria in an incomplete market and provide results on the coincidence of the centralized design outcome and VE assuming constant congestion costs. We prove uniqueness of VE in the risk-neutral setting under the general assumption of linearity of congestion costs. In Section \ref{completeness of the market}, we quantify the impact of inclusion of financial risk-hedging contracts on the market outcome and derive closed form expressions linking the risk trading market outcome to exogenous insurance price level. Two test cases are presented in Section \ref{numerical}: a three node toy network and the standard IEEE 14-bus network. We evaluate numerically the impact of prosumers' heterogeneous risk attitudes and the impact of risk-hedging contracts. 

\subsection{Notations}

Bold symbol $\boldsymbol{x}$ denotes a vector and capital italic symbol $\mathcal{X}$ denotes a set. Exclusion of a set is denoted by
$\mathcal{X}\setminus \mathcal{A} := \{x | x \in \mathcal{X}, x \notin  \mathcal{A}\}$. $\Omega$ represents a finite set of scenarios, and $X(\omega)$ denotes the realization of the random variable $X$ according to the scenario $\omega\in\Omega$. $\mathbb{E}$ denotes the expectation with respect to the discrete distribution $(p_{\omega})_{\omega \in \Omega}$, $R$ refers to a risk measure. 

\section{Statement of Problem} \label{statement}

We consider a single-settlement market for p2p energy trading made of a set $\mc{N}$ of $N$ prosumers -- each one of them being located in a node of the distribution grid. In a single-settlement market, each prosumer $n$ chooses independently her bilateral trades $\boldsymbol{q}_n$, at a cost $\tilde{C}(\boldsymbol{q}_n)$, energy generation $G_n$, incurring an uncertain production cost $C(G_n)$ and demand $D_n$, to maximize her utility function $\Pi_n$.

We assume that there are a finite number of scenarios $\omega \in \Omega$. The bilateral trades negotiated within prosumers, the flexibility activated by the prosumers and the demand all depend on the scenario $\omega$ defining the realized \textit{ target demand} $D^{\ast}_n(\omega)$ and RES-based generations $\Delta G_n(\omega)$ at each node $n$.

\section{Agents} \label{agents}

\subsection{Prosumers}

Let $\Gamma_n$ be the set of neighbors of $n$, with the structure of a communication network (local energy
community). It does not necessarily reflect the grid constraints. As usual, we assume that $n \in \Gamma_n$, for
all $n \in \mc{N}$. In particular, $\Gamma_0 := \mc{N} \setminus \{0\}$.

In each node, we introduce $\mathcal{D}_n := \{D_n \in \mathbb{R}_{+}| \underline{D}_n \leq D_n \leq \overline{D}_n\}$ as agent $n$'s demand set, with $\underline{D}_n$ and $\overline{D}_n$ being the lower and upper-bounds on demand capacity. 
$\Delta G_n$ as the random RES-based generation at node $n$. 

In parallel to the demand and RES-based generation-sides, we define the self-generation-side by letting $G_n := \{ G_n \in \mathbb{R}_+ | \underline{G}_n \leq
G_n \leq \overline{G}_n \}$ be agent n’s generation set, where $\underline{G}_n$
and $\overline{G}_n$ are the lower and upper-bounds on generation capacity.

The decision variables of each prosumer $n$ are her demand $D_n$, energy generation $G_n$, and the quantity exchanged between $n$ and $m$ in the direction from $m$ to $n$, $q_{mn}$, for all $m \in \Gamma_n \setminus \{n \}$. 
If $q_{mn} \geq 0$, then $n$ buys $q_{mn}$ from $m$, otherwise ($q_{mn} < 0$) $n$ sells $-q_{mn}$ to $m$. We impose an inequality on the trading reciprocity:
\begin{equation}\label{reciprocity}
q_{mn} + q_{nm} \leq 0, \qquad \forall m \in \Gamma_n,
\end{equation}
which means that, in the case where $q_{mn} > 0$, the quantity that $n$ buys from $m$ can not be larger than the quantity $q_{nm}$ that $m$ is willing to offer to $n$. Let $\kappa_{nm} \in [0, +\infty )$ be the equivalent interconnection capacity between node $n$ and node $m$, such that $\kappa_{nm} = \kappa_{mn}$ and $G_n \leq \kappa_{nm}$ and 
\begin{equation}\label{eq:capacity}
    q_{nm} \leq \kappa_{nm}, \qquad \forall m \in \Gamma_n
\end{equation}

\subsubsection{Local Supply and Demand Balancing}
Local supply and demand equilibrium leads to the following equality in each node $n$ in $\mathcal{N}$:
\begin{eqnarray}\label{eq:supply_demand_eq}
D_n(\omega) &=& G_n(\omega) + \Delta G_n(\omega) + \sum_{m \in \Gamma_n}  q_{mn}(\omega), \nonumber \\
&=& G_n(\omega) + \Delta G_n(\omega) + Q_n(\omega), \forall \omega \in \Omega. 
\end{eqnarray}

\subsubsection{Cost and Usage Benefit Functions}

Flexibility activation (production) cost in node $n$ is modeled as a quadratic function of local activated flexibility, using three positive parameters $a_n$, $b_n$ and $d_n$:
\begin{equation}\label{eq:prod_cost}
C_n\big(G_n(\omega)\big) = \frac{1}{2} a_n G_n(\omega)^2 + b_n G_n(\omega) +d_n, \forall \omega \in \Omega.
\end{equation}

We make the standard assumption  that self-generation occurs at zero marginal cost.

The usage benefit perceived by agent $n$ is modeled as a strictly concave function of node $n$ demand, using two positive parameters $\tilde{a}_n, \tilde{b}_n$ and a target demand $D^{\ast}_n (\omega)$, defined exogenously for agent $n$:
\begin{equation}
    U_n\big(D_n(\omega)\big) = -\tilde{a}_n(D_n(\omega) - D^{\ast}_n (\omega))^2 + \tilde{b}_n, \forall \omega \in \Omega.
\end{equation}
We consider that prosumers have preferences on the possible trades with their neighbors. The preferences are modeled with  (product) differentiation prices: each agent $n$ has a price $c_{nm}(\bm{q})$ modeled as a function of trading desicions of agents on the edge $(n,m)$ to buy energy from an agent $m$ in her neighborhood $\Gamma_n$. The total trading cost function of agent $n$ is denoted by:
\begin{equation}\label{eq:cost_preference}
\tilde{C}_n(\bm{q}_n) = \sum_{m \in \Gamma_n,m\neq n} c_{nm}(\bm{q}) q_{mn},
\end{equation}
where $c_{nm}(\bm{q}) =  a_{mn} (q_{mn} + q_{nm})+ b_{mn}, \,\, b_{mn} > 0,  a_{mn}=a_{nm} \in \mathbb{R}$ -- edge-specific cost described as a function of the trading decisions on the edge $(n,m)$. Coefficient $a_{mn}$ may be interpreted as a physical property of the link $(n,m)$ equal for both agents $m$ and $n$, for example how the traded flows affect the congestion on the edge, and term $b_{mn}$ captures agent-specific preferences. 

\begin{example}\label{ex1}
Consider the case where $a_{mn}= 0 $, that is -- $c_{nm}(\boldsymbol{q}) = b_{mn} >0$ is a constant, which provides interesting economics interpretations. Parameters $b_{mn}$ can model taxes to encourage/refrain the development of certain technologies (micro-CHPs, storage, solar panels) in some nodes. They can also capture agents' preferences to pay regarding certain characteristics of trades (RES-based generation, location of the prosumer, transport distance, size of the prosumer, etc.). If $q_{mn} >0$ (i.e., $n$ buys $q_{mn}$ from $m$) then $n$ has to pay the cost $b_{mn}q_{mn} >0$. Thus, the higher $b_{mn}$ is, the less interesting it is for $n$ to buy energy from $m$ but the more interesting it is for $n$ to sell energy to $m$. 
\end{example}

\subsubsection{Prosumer $n$'s cost function}

We introduce the revenue generated by the selling of the aggregated surplus to the grid operator as $p_0(\omega) \sum_{n \in \mathcal{N}} Q_n$, where $p_0(\omega)$ is the price of the import as defined on the balancing (real-time market) and $Q_n(\omega) := \sum_{m \in \Gamma_n} q_{mn}(\omega)$ is defined as the difference between the sum of imports and the sum of exports in node $n$. $Q_n$ will be called the net import at node $n$. It is proved in \cite{lecadre} that at the equilibrium, the total sum of the net imports at all nodes should be negative or null, i.e., $\sum_{n\in \mc{N}} Q_n \leq 0$. Prosumer $n$'s cost function is defined as the difference between the usage benefit resulting from the consumption of $D_n$ energy unit and the sum of the flexibility activation, trading costs and $p_0 Q_n$, which represents an allocation of the profit resulting from the surplus selling on the balancing  to prosumer $n$ by the aggregator proportionally to the prosumer's contribution to the surplus. Formally, it takes the form:
\begin{multline} \label{eq:profit}
    \Pi_n(\omega) = C_n\big(G_n(\omega)\big) + \tilde{C}_n\big(\bm{q}_n(\omega)\big) +\\
    + p_0(\omega)Q_n(\omega) - U_n\big(D_n(\omega)\big) , \forall \omega \in \Omega,
\end{multline}

where $\bm{q}_n(\omega) = \big(q_{mn}(\omega)\big)_{m \in \Gamma_n, m \neq n}$.
    
\subsection{Local Market Operator (MO)}

We introduce the social cost as the sum of the cost functions of all the agents in $\mathcal{N}$:
\begin{equation} \label{eq:social_welfare}
SC(\omega) = \sum_{n \in \mathcal{N}} \Pi_n(\omega).
\end{equation}

\section{Two Market Designs}\label{two market designs}

In the following sections, we assume existence of solutions of the considered optimization problems to be given. We consider notions of Generalized Nash Equilibria and Variational Equilibria: both of them exist under mild conditions \cite{kulkarni}, \cite{yin}. 

We introduce a risk measure $R$, that is a functional that associates to a random utility function its deterministic equivalent, i.e., the deterministic utility deemed as equivalent to the random loss. We assume a coherent risk measure \cite{artzner}, meaning that it satisfies four natural properies: monotonicity (if $\boldsymbol{X} \geq \boldsymbol{Y}$ then $R[\boldsymbol{X}] \geq R[\boldsymbol{Y}]$), concavity ($R$ is concave), translation-equivarience ($R[\boldsymbol{X}+c]=R[\boldsymbol{X}]+c$ with $c \in \mathbb{R}$) and positive homogeneity ($R[\lambda \boldsymbol{X}]=\lambda R[\boldsymbol{X}]$, with $\lambda \geq 0$). 

\subsection{Risk neutral formulation}\label{risk neutral}
\subsubsection{Centralized case}

First, we consider centralized case, where global Market Operator minimizes social cost for risk-neutral community. We denote feasibility sets as
\begin{multline}
    \mathcal{K}_n(\boldsymbol{x}_{-n}):= \{\boldsymbol{x}_n=(D_n, G_n, q_{n}) | D_n \in \mathcal{D}_n, G_n \in \mathcal{G}_n,  \\
    (\ref{reciprocity}), (\ref{eq:capacity}), (\ref{eq:supply_demand_eq}) \text{ hold } \forall \omega \in \Omega \},
\end{multline}
where $\boldsymbol{x}_{-n}$ denotes a vector of decisions of all agents excluding agent $n$, and joint admissible set as a $\mathcal{K} := \prod_{n} \mathcal{K}_n(\boldsymbol{x}_{-n})$. Then, formulation is as follows:
\begin{align*}
\min_{\bm{D}, \bm{G}, \bm{q}} \hspace{1cm} &   \sum_{\omega \in \Omega} p_{\omega}\sum_{n \in \mathcal{N}} \Pi_n (\omega),\\
s.t. \hspace{1cm} & (\bm{D}, \bm{G}, \bm{q}) \in \mathcal{K}.
\end{align*}

The objective function is convex as the sum of convex functions defined on a convex feasibility set. Indeed, the feasibility set is obtained as Cartesian product of convex sets. We can compute the Lagrangian function associated with the standard constrained optimization problem of social cost minimization under constraints and derive KKT conditions associated with it to determine the solution of the centralized market design optimization problem. We denote $\underline{\nu}^{\omega}_n, \overline{\nu}^{\omega}_n$ dual variables associated with constraint $\underline{G}_n \leq G_n(\omega) \leq \overline{G}_n$ respectively, $\underline{\mu}^{\omega}_n, \overline{\mu}^{\omega}_n$ dual variables associated with constraint $\underline{D}_n \leq D_n(\omega) \leq \overline{D}_n$. Let $\xi^{\omega}_n$ denote dual variable associated with constraint (\ref{eq:capacity}), $\zeta^{\omega}_n$ denote dual variable for (\ref{reciprocity}) and $\lambda^{\omega}_n$ for (\ref{eq:supply_demand_eq}).

Taking the derivative of the Lagrangian function with respect to $D_n(\omega)$, $G_n(\omega)$, $q_{mn}(\omega)$, for all $n \in \mathcal{N}$ and all $m \in \Gamma_n, m \not= n$, the stationarity conditions write down as follows.

Computations of the first order stationarity conditions give:
\begin{multline}\label{eq:sc1}
    \frac{\partial \mathcal{L}}{\partial D_n(\omega)} = 0 \Leftrightarrow 2 p_{\omega} \tilde{a}_n \Big(D_n (\omega) - D^{\ast}_n(\omega)\Big) \\
    - \underline{\mu}^{\omega}_n + \overline{\mu}^{\omega}_n + \lambda^{\omega}_n = 0, \forall \omega \in \Omega,
\end{multline}
\begin{multline}\label{eq:sc2}
    \frac{\partial \mathcal{L}}{\partial G_n(\omega)} = 0 \Leftrightarrow p_{\omega} \Big(a_n G_n(\omega) + b_n \Big) \\
    - \underline{\nu}^{\omega}_n + \overline{\nu}^{\omega}_n - \lambda^{\omega}_n = 0,\forall \omega \in \Omega,
\end{multline}
\begin{multline}\label{eq:sc3}
    \frac{\partial \mathcal{L}}{\partial q_{mn}(\omega)} = 0 \Leftrightarrow p_{\omega} \Big(2a_{mn}(q_{mn} + q_{nm})+ b_{mn}\Big)\\
     + p_{\omega}p_0(\omega) + \xi^{\omega}_{nm} + \zeta^{\omega}_{nm} -  \lambda^{\omega}_n = 0, \forall \omega \in \Omega. 
\end{multline}

Complementarity slackness conditions take the following form:
\begin{subequations}
\begin{gather}
    0 \leq \underline{\mu}^{\omega}_n \,\,\bot\,\, D_n(\omega) - \underline{D}_n\geq 0, \quad \forall n \in \mathcal{N}, \\
    0 \leq \overline{\mu}^{\omega}_n \,\,\bot\,\, \overline{D}_n - D_n(\omega)\geq 0, \quad \forall n \in \mathcal{N},\\
    0 \leq \underline{\nu}^{\omega}_n \,\,\bot\,\, G_n(\omega) - \underline{G}_n\geq 0,\quad \forall n \in \mathcal{N},\\
    0 \leq \overline{\nu}^{\omega}_n \,\,\bot\,\, \overline{G}_n - G_n(\omega)\geq 0, \quad \forall n \in \mathcal{N}, \\
    0 \leq \xi^{\omega}_{nm} \,\,\bot\,\, \kappa_{mn} - q_{mn} \geq 0, \forall m \in \Gamma_n, m \not= n, \forall n \in \mathcal{N}, \\
    0 \leq \zeta^{\omega}_{nm} \,\,\bot\,\, -q_{mn} - q_{nm} \geq 0, \forall m \in \Gamma_n, m > n, \forall n \in \mathcal{N}.
\end{gather}
\end{subequations}

From the last equation we infer that the nodal price at n can be expressed analytically as the trading decisions $q_{mn}$ and $q_{nm}$, the congestion constraint dual variable, the price of the import and the bilateral trade prices for all $\omega \in \Omega$:
\begin{equation}
    \lambda^{\omega}_n = p_{\omega} (2a_{mn}(q_{mn} + q_{nm})+b_{mn} + p_0(\omega)) + \xi^{\omega}_{nm} + \zeta^{\omega}_{nm}.
\end{equation}

From equation \eqref{eq:sc3}, we infer, for any couple of nodes $n \in \mathcal{N}, m \in \Gamma_n, m > n$, that $\forall \omega \in \Omega$:
\begin{gather*}
    \zeta^{\omega}_{mn} = \lambda^{\omega}_m - \xi^{\omega}_{mn} - p_{\omega} (2a_{mn}(q_{nm} + q_{mn})+b_{nm} + p_0(\omega)), \\
    \zeta^{\omega}_{nm} = \lambda^{\omega}_n - \xi^{\omega}_{nm} - p_{\omega} (2a_{mn}(q_{mn} + q_{nm})+b_{mn} + p_0(\omega)).
\end{gather*}

Subtracting those two last members in the equation above, we infer that $\forall m \in \Gamma_n, m \not= n, \forall n \in \mathcal{N}, \forall \omega \in \Omega$:
\begin{equation}
    \lambda^{\omega}_n - \lambda^{\omega}_m = \xi^{\omega}_{nm} - \xi^{\omega}_{mn} + p_{\omega}(b_{mn} - b_{nm}).
\end{equation}
Then, can we have a closed form expression for the nodal price as a function of $\lambda_0^{\omega}$ only. Indeed node 0 is assumed to belong to any neighborhood. We obtain the following expression for the nodal price at node $n$:
\begin{equation} \label{eq:lambda_n}
    \lambda_n^{\omega} = \lambda_0^{\omega} + \xi^{\omega}_{n0} - \xi^{\omega}_{0n} + p_{\omega}(b_{0n} - b_{n0}), \forall n \in \mathcal{N}, \forall \omega \in \Omega.
\end{equation}

From first two equation of KKT conditions, we infer that, at the optimum, for each node $n$:
\begin{gather}
    D_n(\omega) = D^{\ast}_n(\omega) - \frac{1}{2p_{\omega}\tilde{a}_n}(\lambda^{\omega}_n + (\overline{\mu}^{\omega}_n - \underline{\mu}^{\omega}_n)), \label{eq:D} \\
    G_n(\omega) = -\frac{b_n}{a_n} + \frac{1}{p_{\omega}a_n}(\lambda^{\omega}_n - (\overline{\nu}^{\omega}_n - \underline{\nu}^{\omega}_n)).
    \label{eq:G}
\end{gather}
From these expressions and (\ref{eq:supply_demand_eq}) we infer that the net import at node $n$ can be expressed as a linear function of the nodal price:
\begin{multline}\label{eq:Q}
    Q_n(\omega) = D^{\ast}_n - \frac{1}{2p_{\omega}\tilde{a}_n} (\overline{\mu}^{\omega}_n - \underline{\mu}^{\omega}_n) + \frac{b_n}{a_n} + \frac{1}{p_{\omega}a_n}(\overline{\nu}^{\omega}_n - \underline{\nu}^{\omega}_n) \\
    - (\frac{1}{2p_{\omega}\tilde{a}_n} + \frac{1}{p_{\omega}a_n})\lambda^{\omega}_n - \Delta G_n(\omega).
\end{multline}
The results are summarized in the following proposition.
\begin{proposition}
    At the optimum, the demand, flexibility activation and net import at each node $n$ can be expressed as linear functions of the nodal price at that node, given by equations~\eqref{eq:D}, \eqref{eq:G}, \eqref{eq:Q}.
\end{proposition}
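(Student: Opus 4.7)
The plan is to establish the proposition as a direct algebraic consequence of the KKT system derived just above its statement. As a preparatory step, I would note that the centralized problem is a convex program: the objective is a sum of convex quadratics ($C_n$, $-U_n$, and the symmetric quadratic $\tilde{C}_n$ whose Hessian on each link is positive semidefinite once collected over the pair $(q_{mn},q_{nm})$) together with linear contributions $p_0 Q_n$, and the feasibility set $\mathcal{K}$ is an intersection of affine equalities and linear inequalities. Consequently the KKT conditions are necessary and sufficient, so it suffices to derive (\ref{eq:D})--(\ref{eq:Q}) from (\ref{eq:sc1})--(\ref{eq:sc3}) together with the local balance (\ref{eq:supply_demand_eq}).

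The first two stationarity equations are scalar linear equations in a single primal variable and the nodal price $\lambda_n^{\omega}$, so I would simply isolate that primal variable. Rearranging (\ref{eq:sc1}) for $D_n(\omega)$ produces (\ref{eq:D}); rearranging (\ref{eq:sc2}) for $G_n(\omega)$, after dividing through by $p_{\omega} a_n$, produces (\ref{eq:G}). In both cases the resulting expression is affine in $\lambda_n^{\omega}$, with $D_n^{\ast}(\omega)$, $b_n/a_n$, and the active box multipliers entering only as additive constants independent of $\lambda_n^{\omega}$.

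To obtain (\ref{eq:Q}), I would then rewrite the local balance (\ref{eq:supply_demand_eq}) as $Q_n(\omega) = D_n(\omega) - G_n(\omega) - \Delta G_n(\omega)$ and substitute the two closed-form expressions just derived. The $\lambda_n^{\omega}$-contributions aggregate into the single coefficient $-\bigl(\tfrac{1}{2 p_{\omega} \tilde{a}_n} + \tfrac{1}{p_{\omega} a_n}\bigr)$ appearing in (\ref{eq:Q}), while the remaining terms are either exogenous data ($D_n^{\ast}(\omega)$, $\Delta G_n(\omega)$, $b_n/a_n$) or the same box-constraint multipliers already present in (\ref{eq:D}) and (\ref{eq:G}). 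Note that the stationarity condition (\ref{eq:sc3}) in $q_{mn}(\omega)$ is not needed here: it only enters the analysis of the nodal prices themselves, which is carried out separately in (\ref{eq:lambda_n}).

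The main conceptual point is interpretive rather than computational: the expressions (\ref{eq:D})--(\ref{eq:Q}) retain the bound multipliers $\overline{\mu}_n^{\omega}, \underline{\mu}_n^{\omega}, \overline{\nu}_n^{\omega}, \underline{\nu}_n^{\omega}$ on their right-hand sides, so ``linear in $\lambda_n^{\omega}$'' must be read regime-by-regime with respect to the activity pattern of the box constraints. By complementary slackness at most one multiplier of each pair is nonzero at a given optimum, and its value is pinned by forcing the associated primal variable onto the corresponding box boundary; hence the expressions are genuinely affine in $\lambda_n^{\omega}$ within each activity regime and piecewise affine globally. This is the only place where any care is required, and it does not affect the statement as posed.
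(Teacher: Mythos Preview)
Your proposal is correct and follows essentially the same route as the paper: the paper derives \eqref{eq:D} and \eqref{eq:G} by solving the stationarity conditions \eqref{eq:sc1} and \eqref{eq:sc2} for $D_n(\omega)$ and $G_n(\omega)$, and then obtains \eqref{eq:Q} by substituting into the local balance \eqref{eq:supply_demand_eq}. Your added remarks on convexity and on the regime-by-regime interpretation of the box multipliers are more explicit than what the paper states, but they are consistent with it and do not constitute a different argument.
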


The total sum of the net imports at all nodes should be negative or null, i.e., $\sum_{n \in \mathcal{N}} Q_n(\omega) \leq 0, \forall \omega \in \Omega$. From the supply-demand balancing, this is equivalent to $\sum_{n \in \mathcal{N}} (D_n(\omega) - G_n(\omega)) \leq \sum_{n \in \mathcal{N}} \Delta G_n (\omega)$.  A strict inequality would lead to a situation where a part of the energy generation is wasted. That is not acceptable. To avoid that situation, the RES-based generation should be limited and the demand capacities large enough. Note that the sizing of the prosumers’ capacities and RES-based generation possible clipping strategies are out of the scope of this work. This result is formalized in the proposition below. 

\begin{proposition} \label{prop:CN_no_surplus}
    A necessary condition for no energy surplus is that there is at least one prosumer $n$ in $\mathcal{N}$ whose capacities and RES-based generation are such that $D_n(\omega) - G_n(\omega) \geq \Delta G_n(\omega)$.
\end{proposition}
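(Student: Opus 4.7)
The plan is to argue by contrapositive, using the nodal supply--demand balance~\eqref{eq:supply_demand_eq} to translate the global no-surplus condition into a pointwise statement about the individual prosumers.

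First I would pin down the precise meaning of ``no energy surplus.'' The discussion preceding the proposition says that the equilibrium forces $\sum_{n \in \mathcal{N}} Q_n(\omega) \leq 0$, and that a \emph{strict} inequality corresponds to wasted RES-based generation; hence absence of surplus is the requirement
\[
\sum_{n \in \mathcal{N}} Q_n(\omega) \;=\; 0, \qquad \forall \omega \in \Omega.
\]
Using the local balance~\eqref{eq:supply_demand_eq}, I can rewrite $Q_n(\omega) = D_n(\omega) - G_n(\omega) - \Delta G_n(\omega)$, so the no-surplus equality becomes $\sum_{n \in \mathcal{N}} \bigl(D_n(\omega) - G_n(\omega) - \Delta G_n(\omega)\bigr) = 0$.

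Next, I would prove the contrapositive: suppose that for every $n \in \mathcal{N}$ the prosumer's capacities and RES-based generation force $D_n(\omega) - G_n(\omega) < \Delta G_n(\omega)$. Summing this strict inequality over $n \in \mathcal{N}$ yields
\[
\sum_{n \in \mathcal{N}} \bigl(D_n(\omega) - G_n(\omega)\bigr) \;<\; \sum_{n \in \mathcal{N}} \Delta G_n(\omega),
\]
equivalently $\sum_{n \in \mathcal{N}} Q_n(\omega) < 0$. This is precisely the strict-inequality regime identified above as energy surplus (part of the RES-based generation is wasted), contradicting the no-surplus assumption. Therefore at least one prosumer $n \in \mathcal{N}$ must satisfy $D_n(\omega) - G_n(\omega) \geq \Delta G_n(\omega)$, which is the claimed necessary condition.

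The argument is essentially a single-line summation plus the contrapositive; there is no substantive obstacle. The only point that requires a small amount of care is making explicit that the problem's no-surplus regime corresponds to equality rather than the loose inequality $\sum_n Q_n(\omega) \leq 0$ that the equilibrium only guarantees, so that strict inequality in the summed pointwise bounds is genuinely ruled out. Once that convention is stated, the conclusion follows immediately from~\eqref{eq:supply_demand_eq}.
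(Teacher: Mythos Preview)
Your contrapositive argument is correct and is exactly the natural proof: if every node satisfied the strict inequality, summing and using \eqref{eq:supply_demand_eq} would force $\sum_n Q_n(\omega)<0$, i.e., surplus. The paper itself does not spell this out but simply defers to the identical result in \cite{lecadre}, so your write-up effectively supplies the argument the paper omits.
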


\begin{proof}
    Proof coincides with the proof of Proposition 2 of \cite{lecadre}.
\end{proof}

Substituting \eqref{eq:lambda_n} in \eqref{eq:Q} and taking the sum over $n$, can we derive that:

\begin{multline}
    \sum_{n \in \mathcal{N}} Q_n \leq 0 \Leftrightarrow \lambda^{\omega}_0 \sum_{n} \left( \frac{1}{2p_{\omega} \tilde{a}_n} + \frac{1}{p_{\omega} a_n} \right) \geq  \\ 
    \geq \sum_{n} \Biggl( \left( D^{\ast}_n - \frac{1}{2p_{\omega}\tilde{a}_n} (\overline{\mu}^{\omega}_n - \underline{\mu}^{\omega}_n) + \frac{b_n}{a_n} + \frac{1}{p_{\omega}a_n}(\overline{\nu}^{\omega}_n - \underline{\nu}^{\omega}_n) \right) -\\
    - \left( \frac{1}{2p_{\omega} \tilde{a}_n} + \frac{1}{p_{\omega} a_n} \right) (\xi^{\omega}_{n0} - \xi^{\omega}_{0n} + p_{\omega}(b_{0n} - b_{n0})) - \Delta G_n \Biggr),
\end{multline}
which gives a necessary and sufficient condition on $\lambda^{\omega}_0$ to avoid energy surplus.

\subsubsection{Peer-to-Peer case}
In the peer-to-peer setting, each agent $n \in \mathcal{N}$ determines, by herself, her demand, energy generation and bilateral trades with other agents in her local energy community under constraints on demand, flexibility activation and transmission capacity so as to minimize her expected costs.

Formally, each agent in node $n \in \mathcal{N}$ solves the following optimization problem:
\begin{subequations}
\label{problem_user_neutral}
\begin{align}
\min_{D_n, G_n, \bm{q}_n} \hspace{1cm} &  \E [\Pi_n], \\
s.t. \hspace{1cm} & (D_n, G_n, \bm{q}_n) \in \mathcal{K}_n.
\end{align}
\end{subequations}

\begin{remark} \label{remark:KKTp2p}
All the KKT conditions except \eqref{eq:sc3} coincide with centralized case. Under peer-to-peer market design, equation \eqref{eq:sc3} will be:
\begin{multline}\label{dl/dq_p2p_neutral}
    \frac{\partial \mathcal{L}}{\partial q_{mn}} = 0 \Leftrightarrow p_{\omega} \Big(2a_{mn}q_{mn} + a_{mn}q_{nm}+b_{mn} + p_0(\omega)\Big)+
    \\ + \xi^{\omega}_{nm} + \zeta^{\omega}_{nm} -  \lambda^{\omega}_n = 0, \forall \omega \in \Omega.
\end{multline}
\end{remark}
From the last equation we infer that the nodal price at $n$ can be expressed analytically as:
\begin{equation*}
    \lambda^{\omega}_n = p_{\omega} \Big(2a_{mn}q_{mn} + a_{mn}q_{nm}+b_{mn} + p_0(\omega)\Big) + \xi^{\omega}_{nm} + \zeta^{\omega}_{nm}.
\end{equation*}
From (\ref{dl/dq_p2p_neutral}) following the same path as in centralized case we obtain a closed form expression for the nodal price as a function of $\lambda_0^{\omega}$:
\begin{equation}\label{lambda_n_p2p_neutral}
    \lambda_n^{\omega} = \lambda_0^{\omega}+p_{\omega}a_{n0}(q_{0n}-q_{n0})+p_{\omega}(b_{0n}-b_{n0})+\xi_{n0}^{\omega}-\xi_{0n}^{\omega}.
\end{equation}

\begin{remark}
    From the coincidence of KKT conditions with centralized case, we infer that expressions for $D_n, G_n, Q_n$ at the optimum coincide with (\ref{eq:D}), (\ref{eq:G}), (\ref{eq:Q}) respectively. However, note, that $\lambda^{\omega}_n$ appearing in these expressions differ from the centralized case.
\end{remark}

\begin{remark} \label{risk neutral lambda comoparison}
    Note that expressions for $\lambda_n$ in centralized (\ref{eq:lambda_n}) and peer-to-peer (\ref{lambda_n_p2p_neutral}) settings differ only by the term $p_{\omega}a_{n0}(q_{0n}-q_{n0})$. In the case $a_{0n} \geq 0$ the nodal price in the peer-to-peer case is smaller than in the centralized $\Leftrightarrow$ $q_{0n} < q_{n0}$ -- node $n$ "exports" more energy to node 0, than "imports" from it. Note, that if "export" between node $n$ and node 0 equals "import", then (\ref{lambda_n_p2p_neutral}) coincides  with (\ref{eq:lambda_n}). 
    
    This situation may be viewed as a case of isolated community where there is no possibility to trade energy ($q_{0n} = q_{n0} = 0$) with the grid (e.g. islanded microgrid). In this setting peer-to-peer framework coincides with centralized market design.
\end{remark}

\subsection{Risk Averse Market Design}\label{risk averse}

\subsubsection{Centralized case}
The centralized market design is inspired from the existing pool-based markets. The global Market Operator (MO) maximizes the social welfare defined in (\ref{eq:social_welfare}) in the risk-averse agents framework:
\begin{subequations}
\label{eq:risk_averse_centralized}
\begin{align}
\min_{\bm{D}, \bm{G}, \bm{q}} \hspace{1cm} &  R [SC], \nonumber \\
s.t. \hspace{1cm} & (\bm{D}, \bm{G}, \bm{q}) \in \mathcal{K}.
\end{align}
\end{subequations}

For the case of our study we implement the cVaR as a coherent risk measure.  Denote $\chi_n$ as a risk attitude of the prosumer $n$. The cVaR is the average of all realizations larger than the VaR, where $VaR_n = \min_{\eta_n} \{ \eta_n \, |\, \mathbb{P} [\Pi^{\omega}_n \leq \eta_n] = \chi_n \}$. With a little abuse of notations we denote $VaR_n$ as $\eta_n$. Then, following \cite{rockafellar}, we write cVaR as follows:
\begin{equation}
    R [\Pi_n(\omega)] =  \eta_n + \frac{1}{ (1-\chi_n)} \sum_{\omega \in \Omega} p_{\omega}[\Pi_n(\omega) - \eta_n]^{+}. 
\end{equation}
Note, that $R[\Pi_n(\omega)]$ is convex in $(D_n, G_n, \bm{q_n}, \eta_n)$ if $\Pi_n(\omega)$ is convex in $(D_n, G_n, \bm{q_n})$, which is the case in our model. The non-differentiability of $R[\Pi_n(\omega)]$ can be overcome by leveraging the epigraph form of it as in \cite{rockafellar}:
\begin{equation}
    R [\Pi_n(\omega)] =  \eta_n + \frac{1}{ (1-\chi_n)} \sum_{\omega \in \Omega}p_{\omega} u^{\omega}_n,
\end{equation}
with $u^{\omega}_n \geq 0$  and $\Pi_n(\omega) - \eta_n \leq u^{\omega}_n$ with dual variables $\pi^{\omega}_n$ and $\tau^{\omega}_n$ respectively. Define feasibility set $\tilde{\mathcal{K}}_n$ as
\begin{multline}
    \tilde{\mathcal{K}}_n(\boldsymbol{x}_{-n}) := \{\boldsymbol{x}_n=(D_n, G_n, q_{n}, u_n) | (D_n, G_n, q_{n}) \\ \in \mathcal{K}_n(\boldsymbol{x}_{-n}), 
    u^{\omega}_n \geq 0, \Pi_n(\omega) - \eta_n \leq u^{\omega}_n \},
\end{multline}
and again $\tilde{\mathcal{K}} := \prod_n \tilde{\mathcal{K}}_n(\boldsymbol{x}_{-n})$.
Then, risk-averse formulation of the centralized market design takes the following form:
\begin{subequations} \label{eq:reformulation_risk_averse}
\begin{align}
\min_{\bm{D}, \bm{G}, \bm{q}, \bm{u^{\omega}}} \hspace{1cm} & \sum_{n \in \mathcal{N}} \left( \eta_n + \frac{1}{ (1-\chi_n)} \sum_{\omega \in \Omega}p_{\omega} u^{\omega}_n \right), \\
s.t. \hspace{1cm} & (\bm{D}, \bm{G}, \bm{q}, \bm{u}) \in \tilde{\mathcal{K}}.
\end{align}
\end{subequations}

\begin{proposition}\label{equivalence of the epigraph form}
    The reformulation \eqref{eq:reformulation_risk_averse} is equivalent to the optimization problem~\eqref{eq:risk_averse_centralized}.
\end{proposition}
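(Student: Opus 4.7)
The plan is to invoke the standard Rockafellar--Uryasev epigraph reformulation of cVaR at the level of each prosumer's risk functional, and then to interchange the inner minimization over the auxiliary variables $(\eta_n, u^\omega_n)$ with the outer minimization over $(\bm D, \bm G, \bm q)$. I interpret $R[SC]$ in (\ref{eq:risk_averse_centralized}) as $\sum_{n \in \mc{N}} R[\Pi_n]$, consistently with the heterogeneous risk attitudes $\chi_n$ introduced just above; for homogeneous risk the same argument collapses to a single $(\eta, u^\omega)$ pair.

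The key ingredient is the elementary pointwise identity
$$[\Pi_n(\omega) - \eta_n]^+ \;=\; \min\bigl\{u^\omega_n \in \R \,:\, u^\omega_n \geq 0,\; u^\omega_n \geq \Pi_n(\omega) - \eta_n\bigr\},$$
which holds since any feasible $u^\omega_n$ dominates $[\Pi_n(\omega) - \eta_n]^+$ and this lower bound is attained. Substituting into the cVaR expression, and observing that the coefficient $p_\omega/(1-\chi_n)$ of $u^\omega_n$ in the objective of (\ref{eq:reformulation_risk_averse}) is strictly positive so the inner minimizer drives $u^\omega_n$ exactly to $[\Pi_n(\omega)-\eta_n]^+$, one obtains
$$R[\Pi_n] \;=\; \min_{\eta_n,\, u^\omega_n \geq 0,\, u^\omega_n \geq \Pi_n(\omega) - \eta_n} \left\{ \eta_n + \frac{1}{1-\chi_n}\sum_{\omega \in \Omega} p_\omega\, u^\omega_n\right\}.$$
Summing over $n$ yields exactly the objective of (\ref{eq:reformulation_risk_averse}).

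It then remains to interchange the two infima. Since the additional constraints $u^\omega_n \geq 0$ and $\Pi_n(\omega)-\eta_n \leq u^\omega_n$ defining $\tilde{\mc K}_n$ couple $(\eta_n, u^\omega_n)$ only with agent $n$'s primal variables through $\Pi_n(\omega)$, and are decoupled across agents, the joint minimization over $\tilde{\mc K}$ factors as $\min_{(\bm D, \bm G, \bm q) \in \mc K}\min_{(\bm \eta, \bm u)}$, which together with the display above delivers (\ref{eq:risk_averse_centralized}). The only point needing care is attainment of a finite optimal $\eta_n$; because $\Omega$ is finite and the inner objective is convex and coercive in $\eta_n$, a minimizer exists (any scenario-quantile of $\Pi_n$ at level $\chi_n$ works), which legitimizes the interchange. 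I expect the main obstacle to be purely bookkeeping: keeping straight which variables are minimized at each stage, since the analytical content is the textbook Rockafellar--Uryasev lemma applied agentwise.
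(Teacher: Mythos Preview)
Your proposal is correct and follows essentially the same route as the paper: both arguments rest on the Rockafellar--Uryasev identity $[\Pi_n(\omega)-\eta_n]^+ = \min\{u^\omega_n : u^\omega_n \geq 0,\; u^\omega_n \geq \Pi_n(\omega)-\eta_n\}$ applied agentwise, with the interpretation $R[SC]=\sum_n R_n[\Pi_n]$. The paper's own proof is terser---it simply fixes an optimum of \eqref{eq:risk_averse_centralized} and observes that minimizing over $u^\omega_n$ at that point recovers the same value---whereas you spell out the interchange of infima and the attainment of a finite $\eta_n$; your version is in fact the more complete of the two.
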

\begin{proof}
    Assume $(D^{\ast}_n, G^{\ast}_n, (\bm{q}^{\ast}_n)_n)$ is an optimum solution of \eqref{eq:risk_averse_centralized}. Denote $S^{\ast}$ as an optimal value of the problem \eqref{eq:risk_averse_centralized} and $\Pi^{\ast}_n(\omega)$ - value of utility function of agent $n$ in the point $(D^{\ast}_n, G^{\ast}_n, (\bm{q}^{\ast}_n)_n)$. Consider the following problem:
    \begin{subequations}
    \begin{align}\label{equivalence_proof}
        \min_{u^{\omega}_n} &\sum_{n \in \mathcal{N}} \left(\eta_n + \frac{1}{(1-\chi_n)}\sum_{\omega \in \Omega} p_{\omega}u^{\omega}_n \right), \\
        s.t. \hspace{1cm} & u^{\omega}_n \geq \Pi^{\ast}_n(\omega) - \eta_n, \\
        &u^{\omega}_n \geq 0.
    \end{align}
    \end{subequations}
    Denote $S^{\ast \ast}$ as an optimal value of the above problem. It is clear that for each fixed $(D^{\ast}_n, G^{\ast}_n, (\bm{q}^{\ast}_n)_n)$, $S^{\ast} = S^{\ast \ast}$.
\end{proof}

The objective function is convex as the sum of convex functions defined on a convex feasibility set. Computations of the first order stationarity conditions of the KKT give:
\begin{multline}
    \frac{\partial \mathcal{L}}{\partial D_n(\omega)} = 0 \Leftrightarrow  2 \tau^{\omega}_{n} \tilde{a}_n(D_n(\omega)-D^{\ast}_n) \\
    -\underline{\mu}^{\omega}_n + \overline{\mu}^{\omega}_n + \lambda^{\omega}_n = 0,
\end{multline}
\begin{multline}
    \frac{\partial \mathcal{L}}{\partial G_n(\omega)} = 0 \Leftrightarrow \tau^{\omega}_{n} a_n ( G_n(\omega) +b_n) - \underline{\nu}^{\omega}_n \\
    + \overline{\nu}^{\omega}_n - \lambda^{\omega}_n = 0, 
\end{multline}
\begin{multline}\label{dl/dq}
    \frac{\partial \mathcal{L}}{\partial q_{mn}(\omega)} = 0 \Leftrightarrow \zeta^{\omega}_{nm}+\xi^{\omega}_{nm} +  \tau^{\omega}_{n} \left( 2a_{mn} q_{mn} +a_{mn}q_{nm} \right) \\ +\tau^{\omega}_{n}\left(b_{mn}+p_0(\omega) \right)  + \tau^{\omega}_{m} a_{nm} q_{nm} - \lambda^{\omega}_n =0, 
\end{multline}
\begin{equation}
    \frac{\partial \mathcal{L}}{\partial u^{\omega}_n} =0 \Leftrightarrow \frac{p_{\omega}}{1-\chi_n} - \tau^{\omega}_{n} - \pi^{\omega}_n = 0. 
\end{equation}

\begin{remark}\label{subjective probabilities definition}
    It is easy to see that expressions for $D_n(\omega), G_n(\omega), Q_n(\omega)$ coincide with (\ref{eq:D}), (\ref{eq:G}), (\ref{eq:Q}) with two differences: first, instead of $p_{\omega}$ we have $\tau^{\omega}_{n}$, which is expressed as: 
    \begin{equation}
        \tau^{\omega}_n = \frac{p_{\omega}}{1-\chi_n} - \pi^{\omega}_n. \label{eq:tau}
    \end{equation}
    Second, expression for $\lambda$ differs (because of $\tau^{\omega}_{n}$ as well). $\tau^{\omega}_n$ could be viewed as a risk-adjusted probabilities \cite{moret} from agent $n$'s point of view.

Note that there are two additional complementarity slackness conditions (comparing to the risk neutral case) for constraints $u^{\omega}_n \geq 0$ and $ \Pi_n(\omega) - \eta_n \leq u^{\omega}_n$ :
\begin{subequations}
\begin{gather}
    0 \leq \pi^{\omega}_n \,\,\bot\,\, u^{\omega}_n \geq 0, \label{comp_pi}\\
    0 \leq \tau^{\omega}_n \,\,\bot\,\, u^{\omega}_n - (\Pi_n(\omega) - \eta_n) \geq 0. \label{comp_tau}
\end{gather}
\end{subequations}
It follows that we can consider two complementary cases:
\begin{enumerate}
    \item $\Pi_{n}(\omega) - \eta_n \geq 0$ -- it means that $u^{\omega}_n \geq 0$ and from (\ref{comp_pi}) $\pi^{\omega}_n = 0$. It means that when loss function $\Pi_n(\omega)$ of agent $n$ exceeds some threshold $\eta_n$ defined through his risk attitude $\chi_n$, dual variable $\tau^{\omega}_n$ is defined as:
    \begin{equation}
        \tau^{\omega}_n = \frac{p_{\omega}}{1-\chi_n}.
    \end{equation}
    
    \item $\Pi_{n}(\omega) - \eta_n < 0$ -- it means that $u^{\omega}_n = 0$ and from (\ref{comp_tau}) $\tau^{\omega}_n = 0$. It means that when loss function $\Pi_n(\omega)$ of agent $n$ is below $\eta_n$, dual variable $\pi^{\omega}_n$ is defined as: 
    \begin{equation}
        \pi^{\omega}_n = \frac{p_{\omega}}{1-\chi_n}.
    \end{equation}
\end{enumerate}

\end{remark}

\subsubsection{Peer-to-Peer Case}

In the peer-to-peer setting, each agent $n \in \mathcal{N}$ determines, by herself, her demand, energy generation and bilateral trades with other agents in her local energy community under constraints on demand, flexibility activation and transmission capacity so as to minimize her costs. Agent $n$'s optimization problem takes the following form:
\begin{subequations}
\label{problemUsern}
\begin{align}
\min_{D_n, G_n, \bm{q}_n} \hspace{1cm} & R_n[\Pi_n], \label{eq:p2p_objectif} \\
s.t. \hspace{1cm} & (D_n, G_n, \bm{q}_n) \in \mathcal{K}_n(\boldsymbol{x}_{-n}),
\end{align}
\end{subequations}
where $\bm{q}_n= \big(q_{mn}\big)_{m\in\Gamma_n}$ are the trading decisions of prosumer $n$.

Again implementing cVaR as a risk measure and leveraging its epigraph form, we formulate agent $n$'s optimization problem as:

\begin{subequations}
\begin{align}
\min_{D_n, G_n, \bm{q}_n, u_n} \hspace{1cm} &   \eta_n + \frac{1}{ (1-\chi_n)} \sum_{\omega \in \Omega} p_{\omega} u^{\omega}_n,\\
s.t.  \hspace{1cm} &(D_n, G_n, \bm{q}_n, u_n) \in \tilde{\mathcal{K}}_n(\boldsymbol{x}_{-n}).
\end{align}
\end{subequations}

\begin{remark}
    Note that all the KKT conditions except (\ref{dl/dq}) coincide with the centralized case. In peer-to-peer setting, (\ref{dl/dq}) will be:
    \begin{multline}
        \frac{\partial \mathcal{L}}{\partial q_{mn}(\omega)} = 0 \Leftrightarrow \zeta^{\omega}_{nm}+\xi^{\omega}_{nm} + \tau^{\omega}_{n} \left( 2a_{mn} q_{mn} +a_{mn}q_{nm} \right)\\ +\tau^{\omega}_n\left(b_{mn}+p_0(\omega) \right) - \lambda^{\omega}_n =0.
    \end{multline}
\end{remark}

\section{Equilibrium problems}\label{equilibrium problems}

We consider equilibrium problems in an incomplete market.

\subsection{Basic definitions}

Since our problem is formulated with a scenario based approach, the classical definitions of equilibrium (GNE, VE) introduced in a deterministic setting are easily transposable for each scenario.

\begin{definition}{\emph{Generalized Nash equilibrium} \cite{kulkarni}}
A Generalized Nash Equilibrium (GNE) of the game defined by the minimization problems \eqref{problem_user_neutral} (resp. \eqref{problemUsern}) with coupling constraints, is a vector $(D_n(\omega), G_n(\omega), \bm{q}(\omega)_n)_n$ that solves the minimization problems \eqref{problem_user_neutral} (resp. \eqref{problemUsern}) or, equivalently, a vector $(D_n(\omega), G_n(\omega), \bm{q}(\omega)_n)_n$ such that $(D_n(\omega), G_n(\omega), \bm{q}(\omega)_n)_n$ solves the system $KKT_n$ for each $n$.

\end{definition}

\begin{definition}{\emph{Variational equilibrium} \cite{kulkarni}}
A Variational Equilibrium (VE) of the game defined by the minimization problems \eqref{problem_user_neutral} (resp. \eqref{problemUsern}) with coupling constraints, is a vector $(D_n(\omega), G_n(\omega), \bm{q}(\omega)_n)_n$ that solves the minimization problems \eqref{problem_user_neutral} (resp. \eqref{problemUsern}) or, equivalently, a vector $(D_n(\omega), G_n(\omega), \bm{q}(\omega)_n)_n$ such that $(D_n(\omega), G_n(\omega), \bm{q}(\omega)_n)_n$ solves the system $KKT_n$ for each $n$ and, in addition, such that the Lagrangian multipliers associated to the coupling constraints (\ref{reciprocity}) are equal, i.e.:
\begin{equation}
    \zeta^{\omega}_{nm} = \zeta^{\omega}_{mn}.
\end{equation}
\end{definition}
The term ``variational'' refers to the variational inequality problem associated to such an equilibrium: $\bm{\hat{x}} \in \mathcal{K}$ is a Variational Equilibrium if, and only if, it is a solution of:
\begin{equation}
    \left\langle \nabla \Pi_n(\bm{\hat{x}}), \bm{x} - \bm{\hat{x}}\right\rangle \leq 0, \forall x \in \mathcal{K}.
\end{equation}

\subsection{Results}

We can observe, that in risk averse case in the setting of Example~\ref{ex1}, Variational Equilibria are defined by exactly the same KKTs system than the social cost minimizer, which leads to the following result:
\begin{proposition}\label{simplified VE coincidence for risk averse}
    Assuming $a_{mn}=a_{nm}=0, \forall m,n\in\mc{N}, m\not= n$, in the risk averse case the set of Variational Equilibria (such that $\zeta^{\omega}_{nm} = \zeta^{\omega}_{mn}$ for all $n \in \mathcal{N}$ and all $m \not= n \in \Gamma_n$) coincides with the set of social cost optima.
\end{proposition}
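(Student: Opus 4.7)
The plan is to show that, under $a_{mn}=a_{nm}=0$, the KKT system characterizing a Variational Equilibrium of the peer-to-peer risk-averse game and the KKT system characterizing an optimum of the centralized social cost minimization problem \eqref{eq:reformulation_risk_averse} coincide term-by-term. The conclusion will then follow from the standard equivalence between KKT points and optimal solutions for convex programs.

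First, I would collect, for each agent $n$, the KKT conditions of her peer-to-peer risk-averse problem: the stationarity conditions with respect to $D_n$, $G_n$, $u^{\omega}_n$, and $q_{mn}$ (as displayed in the Peer-to-Peer subsection), together with the complementarity slackness conditions on the demand bounds, generation bounds, capacity constraint \eqref{eq:capacity}, reciprocity constraint \eqref{reciprocity}, and the epigraph constraints \eqref{comp_pi}-\eqref{comp_tau}. I would then write the KKT conditions of the centralized problem \eqref{eq:reformulation_risk_averse} using the same multiplier names. The stationarity conditions in $D_n$, $G_n$ and $u^{\omega}_n$ are identical to the agent-$n$ P2P ones, as noted in the Remark following \eqref{dl/dq}. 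The only genuine difference lies in the $q_{mn}$ stationarity: the centralized version \eqref{dl/dq} carries an extra cross-term $\tau^{\omega}_m a_{nm} q_{nm}$ that the peer-to-peer version lacks.

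Under the hypothesis $a_{mn}=a_{nm}=0$ for all $m\neq n$, all terms containing these coefficients vanish, and both $q_{mn}$-stationarity conditions collapse to $\zeta^{\omega}_{nm}+\xi^{\omega}_{nm}+\tau^{\omega}_n(b_{mn}+p_0(\omega))-\lambda^{\omega}_n=0$. The only remaining mismatch concerns the multiplier of the reciprocity constraint \eqref{reciprocity}: in the centralized problem it is a single multiplier $\zeta^{\omega}_{nm}$ attached to the shared constraint $q_{mn}+q_{nm}\leq 0$, whereas in the peer-to-peer game agents $n$ and $m$ each have their own multiplier, $\zeta^{\omega}_{nm}$ and $\zeta^{\omega}_{mn}$ respectively. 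The VE requirement $\zeta^{\omega}_{nm}=\zeta^{\omega}_{mn}$ identifies these two multipliers with the single centralized one, so that the aggregated P2P KKT system and the centralized KKT system agree term-by-term.

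I would close the argument using convexity. By Proposition \ref{equivalence of the epigraph form} and the convexity of $\Pi_n$ in $(D_n,G_n,\bm{q}_n)$, both the centralized program \eqref{eq:reformulation_risk_averse} and each agent's P2P program are convex with affine constraints, so a Slater-type qualification holds and the KKT conditions are necessary and sufficient for optimality on both sides. Hence a feasible profile is a social cost optimum if and only if it satisfies the shared KKT system if and only if it is a VE, giving the set equality. The main (minor) obstacle I anticipate is bookkeeping: making sure that the non-stationarity KKT components—especially the epigraph complementarities \eqref{comp_pi}-\eqref{comp_tau} and the reciprocity complementarity on $\zeta^{\omega}_{nm}$—are written symmetrically in the centralized and aggregated P2P systems, so that the identification of multipliers is truly global rather than confined to the stationarity equations.
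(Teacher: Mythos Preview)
Your proposal is correct and follows exactly the approach the paper itself uses: the paper's entire argument is the sentence preceding the proposition, namely that ``Variational Equilibria are defined by exactly the same KKTs system than the social cost minimizer''. Your write-up is in fact more thorough than the paper's, since you make explicit why the only discrepancy (the cross-term $\tau^{\omega}_m a_{nm} q_{nm}$ in \eqref{dl/dq}) vanishes under $a_{mn}=0$, why the VE alignment $\zeta^{\omega}_{nm}=\zeta^{\omega}_{mn}$ identifies the shared-constraint multipliers, and why convexity makes the KKT conditions necessary and sufficient on both sides.
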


\begin{remark}
    For risk-neutral case with the general formulation of the trading cost $\tilde{C}_{n}(\bm{q}_n)$ it doesn't hold as is shown in Remark~\ref{remark:KKTp2p}.
\end{remark}

\begin{proposition}\label{general VE coincidence for risk neutral}
    Risk-neutral peer-to-peer market design problem admits a unique Variational Equilibrium.
\end{proposition}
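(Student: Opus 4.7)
The plan is to exhibit the risk-neutral peer-to-peer game as a Generalized Potential Game and deduce uniqueness from strict convexity of the potential on the convex joint feasibility set~$\mc{K}$. Concretely, I would build a function $P(\bm{D},\bm{G},\bm{q})$ such that $\nabla_{\bm{x}_n} P = \nabla_{\bm{x}_n} \mathbb{E}[\Pi_n]$ for every $n$; then the joint KKT system defining a VE coincides with that of the convex program $\min_{\bm{x}\in\mc{K}} P(\bm{x})$, and strict convexity of~$P$ delivers a unique minimizer, hence a unique VE.

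The separable pieces of $\mathbb{E}[\Pi_n]$ — namely $\mathbb{E}[C_n(G_n)]$, $-\mathbb{E}[U_n(D_n)]$ and the linear revenue term $\mathbb{E}[p_0 Q_n]$ — depend only on agent~$n$'s own variables and sum directly across agents. The only non-trivial contribution comes from the bilateral trading cost $\tilde C_n(\bm{q}_n)$. Per scenario $\omega$, I would propose the symmetric edge-based potential
\begin{equation*}
P_q(\bm{q}(\omega)) = \sum_{\{n,m\}:\, n<m,\, m\in\Gamma_n} \Bigl[ a_{mn}\bigl( q_{mn}(\omega)^{2} + q_{nm}(\omega)^{2} + q_{mn}(\omega) q_{nm}(\omega) \bigr) + b_{mn} q_{mn}(\omega) + b_{nm} q_{nm}(\omega) \Bigr],
\end{equation*}
and verify, using $a_{mn}=a_{nm}$, that $\partial P_q/\partial q_{mn}(\omega) = 2 a_{mn} q_{mn}(\omega) + a_{mn} q_{nm}(\omega) + b_{mn}$, which matches $\partial \tilde C_n/\partial q_{mn}(\omega)$ (symmetrically for $q_{nm}$). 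Setting $P = \sum_\omega p_\omega \bigl[ \sum_n \bigl( C_n(G_n(\omega)) - U_n(D_n(\omega)) + p_0(\omega) Q_n(\omega) \bigr) + P_q(\bm{q}(\omega)) \bigr]$ then produces a potential whose gradient recovers precisely the stationarity conditions of Remark~\ref{remark:KKTp2p}.

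Strict convexity of $P$ on $\mc{K}$ then reduces to a block analysis of the Hessian: the $D_n$ and $G_n$ blocks are strictly positive thanks to $\tilde a_n>0$ and $a_n>0$; the per-edge block has matrix $a_{mn}\bigl(\begin{smallmatrix}2 & 1 \\ 1 & 2\end{smallmatrix}\bigr)$ whose eigenvalues are $3 a_{mn}$ and $a_{mn}$, hence positive definite under $a_{mn}>0$; the revenue term and the constant $b_{mn}$ pieces contribute nothing to the Hessian; and different edges share no variables, so the global Hessian is block-diagonal across edges. Strict convexity of $P$ on the convex polyhedral set $\mc{K}$ delivers a unique minimizer, which is the unique Variational Equilibrium.

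The main obstacle is recognising that the naive candidate $\sum_n \tilde C_n$ is \emph{not} a potential — its partial with respect to $q_{mn}$ picks up a spurious $a_{mn} q_{nm}$ contribution coming from the symmetric appearance of $q_{mn}$ in $\tilde C_m$ — so one has to guess the symmetric edge-wise form $P_q$ above before proceeding. Once $P_q$ is identified, the rest of the argument (a direct Hessian check plus the standard equivalence between VE and minimizers of the potential on the shared convex feasibility set) is routine.
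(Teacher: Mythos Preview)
Your argument is correct and self-contained, but it takes a different route from the paper. The paper does not construct $P_q$ explicitly; instead it recasts $\tilde C_n$ as a link-based congestion cost by writing $c_l(\bm{q}) = a_l \sum_n q_l^n + b_l$ for each edge $l$, observes that the remaining pieces of $\mathbb{E}[\Pi_n]$ are strongly convex and own-variable only, checks the structural hypotheses $(c_l)'_i = (c_l)'_j > 0$ and $(c_l)''_{ij} = 0$, and then invokes Proposition~1 of Yin--Shanbhag--Mehta~\cite{yin} to conclude uniqueness of the VE. Your explicit potential together with the $2\times 2$ edge-Hessian check amounts to re-deriving, for this particular affine link cost, the monotonicity that \cite{yin} establishes abstractly for the whole class. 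Both routes need $a_{mn}>0$ --- in the paper this appears as the condition $(c_l)'>0$ --- a hypothesis not actually stated in the proposition itself; your direct construction has the merit of making that requirement transparent and of being readable without access to \cite{yin}, while the paper's version situates the result inside a known congestion-game framework and sidesteps the guesswork you flag about identifying $P_q$.
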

\begin{proof}
    Note, that in the setting of risk-neutral case, utility function of player $n$ can be viewed as the sum of two terms: 
    \begin{itemize}
        \item $C_n(G_n(\omega)) + p_0(\omega)Q_n(\omega) - U_n(D_n(\omega))$ -- user-specific part of utility function, which depends only on the variables of player $n$ only and not on the variables of other player. Note, that it is strongly convex twice continuously differentiable function as the sum of strongly convex and convex functions of $x_n = (D_n, G_n, \bm{q}_n)$. 
        
        \item $\tilde{C}_n(\bm{q}_n(\omega))$ -- trading cost of agent $n$, modeled as a function of trading decisions of \textit{agents} on the edge $(n,m)$. Note, that trading decisions $q_{mn}$ of agent $n$ could be viewed as a decision of agent $n$ on the link $(m,n)$ - that is, let $\mathcal{L}$ denote the set of all edges of the network. Then, without loss of generality and a small abuse of notation, we denote function $q^{n}_{l}, l = (m,n), \forall n,m \in \mathcal{N}$  as follows:
        \begin{equation*}
            q^n_l = \left\{ 
            \begin{aligned}
                & q_{mn}, \quad m \in \Gamma_n, \\
                & 0, \quad \text{otherwise}.
            \end{aligned}
            \right.
        \end{equation*}
    \end{itemize}
    Then, $c_{nm}(\bm{q})$ - link-specific cost - can be rewritten for link $l \in \mathcal{L}$ as:
    \begin{equation*}
        c_{nm}(\bm{q}) = c_l(\bm{q}) = a_{l} \sum_{n \in \mathcal{N}} q^{n}_l + b_l.
    \end{equation*}
    Note that $(c_l)'_{i} = (c_l)'_{j} > 0, i \not= j$ and $(c_l)''_{ij} = 0$. To conclude, note, that strategy sets of each user are closed and convex. Then, from Proposition 1 in \cite{yin}, it follows that peer-to-peer problem admits unique VE.
\end{proof}

\section{Completeness of the market}\label{completeness of the market}
To complete the market in the sense of definition given in \cite{baron}, \cite{moret} - that is - a market is said to be complete whenever there exists an equilibrium price for every asset in every possible state of the world. The market is incomplete otherwise., we include financial contracts that are intended to hedge the risk for market participants \cite{gerard, moret}. We assume that the agents can trade risk with each other using these contracts, i.e., they
pay a certain amount contingent on a given scenario occurring. Note, that risk trading layer differs from energy trading layer (e.g. it doesn't depend on physical properties) - in this section we assume that agent $n$ can trade risk with the whole community $\mathcal{N}$. The price for the contract corresponding to the scenario $\omega$ is denoted $\alpha^{\omega}$. We also assume existence of the exogenous agent (e.g. aggregator, insurance company etc.) which back up shortfall of the risk traded with fixed prices $\gamma^{\omega}$ per scenario $\omega$. We impose that for each $\omega \in \Omega$, $\alpha^{\omega} \leq 1, \gamma^{\omega} \leq 1$ so it is beneficial to buy contract corresponding to the scenario $\omega$.

Trading risk involves prosumers making a first-stage decision which consists in agent $n$ paying $\sum_{\omega \in \Omega} \alpha^{\omega} W^{\omega}_n + \sum_{\omega \in \Omega} \gamma^{\omega} J^{\omega}_n$ to obtain contingent payments $W^{\omega}_n + J^{\omega}_n$ in each outcome $\omega \in \Omega$. Here we assume that $W^{\omega}_n$ are the contracts traded inside the community and $J^{\omega}_n$ are the contracts traded with an exogenous agent. Note that $W^{\omega}_n$ could be negative, while $J^{\omega}_n \geq 0$.

Then, objective function $R_n[\Pi_n]$ of prosumer $n$ in risk-averse case is expressed as follows:
\begin{multline}
     R_n [\Pi_n(\omega)] =  \eta_n + \sum_{\omega \in \Omega} \alpha^{\omega} W^{\omega}_n + \sum_{\omega \in \Omega} \gamma^{\omega} J^{\omega}_n \\
     + \frac{1}{ (1-\chi_n)} \sum_{\omega \in \Omega} p_{\omega}[\Pi_n(\omega) - (W^{\omega}_n + J^{\omega}_n) - \eta_n]^{+}.     
\end{multline}
We add risk trading balance per scenario condition:
\begin{equation}\label{risk balance condition}
    \sum_{n \in \mathcal{N}} \left( W^{\omega}_n + J^{\omega}_n \right)  = 0, \qquad \forall \omega \in \Omega,
\end{equation}
with dual variable $\phi^{\omega}$ associated to it.
Define feasibility set $\hat{\mathcal{K}}$ as:
\begin{multline}
    \hat{\mathcal{K}}_n(\boldsymbol{x}_{-n}) := \{\boldsymbol{x}_n=(D_n, G_n,q_{n}, u_n, W_n, J_n) | (D_n, G_n,\\ q_{n}, u_n) \in \tilde{\mathcal{K}}_n(\boldsymbol{x}_{-n}), 
    (\ref{risk balance condition}) \text{ and } 0 \leq J^{\omega}_n \text{ hold } \forall \omega \in \Omega   \},
\end{multline}
and $\hat{\mathcal{K}} := \prod_n \hat{\mathcal{K}}_n(\boldsymbol{x}_{-n})$.
\subsection{Centralized case}
Thus, we formulate risk-averse formulation with risk trading using auxiliary variables $u^{\omega}_n$ as:
\begin{subequations} \label{eq:reformulation_risk_trading_centralized}
\begin{align}
\min_{\bm{D}, \bm{G}, \bm{q}, \bm{u^{\omega}}, \bm{W^{\omega}}, \bm{J^{\omega}}} \hspace{1cm} & \sum_{n \in \mathcal{N}} R_n[\Pi_n(\omega)],\\
s.t. \hspace{1cm} & (\bm{D}, \bm{G}, \bm{q}, \bm{u}, \bm{W}, \bm{J}) \in \hat{\mathcal{K}}.
\end{align}
\end{subequations}

Comparing to the model without financial contracts, there will be two additional first order stationarity conditions:
\begin{subequations}
\begin{gather}
    \frac{\partial \mathcal{L}}{\partial W^{\omega}_n} = 0 \Leftrightarrow  \alpha^{\omega} - \tau^{\omega}_n + \phi^{\omega} = 0, \label{dl/dw}\\
    \frac{\partial \mathcal{L}}{\partial J^{\omega}_n} = 0 \Leftrightarrow  \gamma^{\omega} - \tau^{\omega}_n + \phi^{\omega}  - \sigma^{\omega}_n= 0. \label{dl/dj}
\end{gather}
\end{subequations}

\begin{theorem}\label{chracterization of the risk trading price}
    Price $\alpha^{\omega}$ of the contracts inside the community is not bigger than the exogenous price $\gamma^{\omega}$. More precisely, there are three regimes of the market:
    
    \textbf{(1)} When exogenous company settles trading price $\gamma^{\omega} \leq \min_{n \in \mathcal{N}} \frac{p_{\omega}}{1-\chi_n}$, price $\alpha^{\omega}$ for trading contracts inside the community is defined as follows:
    \begin{itemize}
    \item[] \textbf{(1.a)} When there is at least one agent trading risk with the exogenous agent, $\alpha^{\omega} = \gamma^{\omega}$. 
            
    \item[] (\textbf{1.b}) When the risk is fully traded inside the community ($J^{\omega}_n = 0, \, \forall n \in \mathcal{N}$), $\alpha^{\omega} = -\phi^{\omega} \leq \gamma^{\omega}$, where $\phi^{\omega}$ is dual variable for risk balance condition.
    \end{itemize}
    
    \textbf{(2)} When exogenous price is $\gamma^{\omega} \geq \max_{n} \frac{p_{\omega}}{1-\chi_n}$, there are three possible regimes for price $\alpha^{\omega}$:
    \begin{itemize}
    \item[] \textbf{(2.a)} $\alpha^{\omega} = -\phi^{\omega}$, where $\phi^{\omega}$ is a dual variable for risk balance condition.
            
     \item[] \textbf{(2.b)} $\alpha^{\omega} = \tau^{\omega} - \phi^{\omega}$, where $\tau^{\omega}_n$ are aligned across agents. 
            
     \item[] \textbf{(2.c)} $\alpha^{\omega} = \frac{p_{\omega}}{1-\min_{n}\chi_n}$ -- price of the contracts for risk trading inside the community is equal to the least risk-averse prosumer's risk adjusted probability. 
     \end{itemize}   
    \textbf{(3)} When $\min_{n} \frac{p_{\omega}}{1-\chi_n} < \gamma^{\omega} < \max_{n} \frac{p_{\omega}}{1-\chi_n}$, contracts inside the community are settled as follow:
    \begin{itemize}
     \item[]\textbf{(3.a)} If there is at least one agent $n'$ with $\frac{p_{\omega}}{1-\chi_{n'}} < \gamma^{\omega}$ and $\Pi(\omega)_{n'} - \eta_{n'} > 0$, then $\alpha^{\omega} = \gamma^{\omega}$.
            
     \item[]\textbf{(3.b)} Otherwise, price $\alpha^{\omega}$ is settled as in the case (2).
     \end{itemize}
\end{theorem}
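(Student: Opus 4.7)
The plan is to work entirely from the two new stationarity conditions (\ref{dl/dw})--(\ref{dl/dj}), together with the formula $\tau^\omega_n=p_\omega/(1-\chi_n)-\pi^\omega_n$ and the complementarity structure summarised in Remark~\ref{subjective probabilities definition}. I would begin with the simple but crucial observation that (\ref{dl/dw}) forces $\tau^\omega_n=\alpha^\omega+\phi^\omega$ \emph{independently of $n$}: the inner-community contract aligns the risk-adjusted probabilities across prosumers for every scenario $\omega$. Write $\tau^\omega:=\tau^\omega_n$. Substituting this into (\ref{dl/dj}) yields $\sigma^\omega_n=\gamma^\omega-\alpha^\omega\ge 0$, so the first universal conclusion $\alpha^\omega\le\gamma^\omega$ drops out of the non-negativity of the multiplier $\sigma^\omega_n$ of the constraint $J^\omega_n\ge 0$.

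Next I would use $\tau^\omega=p_\omega/(1-\chi_n)-\pi^\omega_n$ with $\pi^\omega_n\ge 0$ to obtain the two-sided bound
\begin{equation*}
0\le\tau^\omega\le\min_{n\in\mc{N}}\frac{p_\omega}{1-\chi_n},
\end{equation*}
with the upper bound achieved precisely when $\pi^\omega_{n^\star}=0$ for $n^\star\in\argmin_n p_\omega/(1-\chi_n)$, i.e., when the least risk-averse agent lies \emph{above} her VaR in scenario $\omega$ (case~1 of Remark~\ref{subjective probabilities definition}), and the lower bound achieved when every agent lies \emph{below} her VaR (case~2). These observations dictate the three regimes of the theorem.

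For regime~(1), $\gamma^\omega\le\min_n p_\omega/(1-\chi_n)$: if some $J^\omega_{n_0}>0$ then complementarity forces $\sigma^\omega_{n_0}=0$, so $\alpha^\omega=\gamma^\omega$, which is case~(1.a); otherwise all $J^\omega_n=0$, the exogenous channel is inactive, and the internal price is pinned only by (\ref{dl/dw}) to $\alpha^\omega=\tau^\omega-\phi^\omega$, which reduces to $\alpha^\omega=-\phi^\omega$ when every agent sits below VaR (forcing $\tau^\omega=0$), giving case~(1.b). For regime~(2), $\gamma^\omega\ge\max_n p_\omega/(1-\chi_n)$: the exogenous price is so high that $\sigma^\omega_n>0$ for every $n$ by the bound above, so $J^\omega_n=0$ for all $n$, the internal market clears alone, and the three sub-cases (2.a)--(2.c) correspond respectively to all agents below VaR ($\tau^\omega=0$), an intermediate alignment, and the extremal alignment $\tau^\omega=\min_n p_\omega/(1-\chi_n)=p_\omega/(1-\min_n\chi_n)$. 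For regime~(3), $\min_n p_\omega/(1-\chi_n)<\gamma^\omega<\max_n p_\omega/(1-\chi_n)$, the argument splits on whether an agent $n'$ exists for whom $p_\omega/(1-\chi_{n'})<\gamma^\omega$ and $\Pi_{n'}(\omega)-\eta_{n'}>0$. In that subcase the complementarity $\pi^\omega_{n'}=0$ combined with (\ref{dl/dj}) forces $\sigma^\omega_{n'}>0\Rightarrow J^\omega_{n'}=0$ but leaves some other agent with $J^\omega_n>0$ possible and $\alpha^\omega=\gamma^\omega$ (case~3.a); otherwise the exogenous channel is blocked for every agent and we fall back to the analysis of regime~(2) (case~3.b).

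The routine parts are the algebra of eliminating $\tau^\omega_n,\sigma^\omega_n$ and matching the complementarity branches of Remark~\ref{subjective probabilities definition}. The main obstacle I anticipate is the bookkeeping in regime~(3): one must simultaneously track, per agent, \emph{two} binary switches (above/below VaR via $\pi^\omega_n$, and $J^\omega_n$ zero/positive via $\sigma^\omega_n$) while enforcing the single scalar alignment $\tau^\omega_n\equiv\tau^\omega$, and verify that no contradictory combination survives. A clean way to handle it is to argue by contradiction---assume an agent with $p_\omega/(1-\chi_{n'})<\gamma^\omega$ is above her VaR yet $\alpha^\omega\neq\gamma^\omega$, and derive an inconsistency between $\tau^\omega=\alpha^\omega+\phi^\omega\le p_\omega/(1-\chi_{n'})$ and $\sigma^\omega_{n'}=\gamma^\omega-\alpha^\omega=0$ that would be forced by $J^\omega_{n'}>0$---and otherwise invoke the analysis of regime~(2) verbatim.
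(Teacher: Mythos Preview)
Your overall strategy matches the paper's: both proofs extract from (\ref{dl/dw})--(\ref{dl/dj}) the alignment $\tau^\omega_n\equiv\tau^\omega$, the identity $\sigma^\omega_n=\gamma^\omega-\alpha^\omega$ (giving $\alpha^\omega\le\gamma^\omega$ and $\sigma^\omega_n$ independent of $n$), and then split on whether some $J^\omega_n>0$. Your derivation of these facts and your treatment of~(1.a) coincide with the paper.

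There are, however, two genuine gaps. First, in~(1.b) you simply assert that ``every agent sits below VaR (forcing $\tau^\omega=0$)''. This is the non-obvious step, and the paper supplies the missing argument: if some $n'$ had $\Pi_{n'}(\omega)-W^\omega_{n'}-\eta_{n'}>0$, then because regime~(1) guarantees $\gamma^\omega\le p_\omega/(1-\chi_{n'})$, purchasing $J^\omega_{n'}>0$ would strictly lower her objective, contradicting $J^\omega_n=0$ for all $n$. Without this profitability contradiction the conclusion $\alpha^\omega=-\phi^\omega$ is unsupported. Second, your handling of~(3.a) is internally inconsistent. You argue $\sigma^\omega_{n'}>0\Rightarrow J^\omega_{n'}=0$ ``but leaves some other agent with $J^\omega_n>0$ possible''; yet you have already shown $\sigma^\omega_n=\gamma^\omega-\alpha^\omega$ is the \emph{same} for every $n$, so either $\alpha^\omega=\gamma^\omega$ or $J^\omega_n=0$ for \emph{all} $n$---there is no room for one agent to be shut out while another buys. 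The paper instead argues directly that the distinguished agent $n'$ herself finds it profitable to set $J^\omega_{n'}>0$, after which the reasoning of~(1.a) applies verbatim. Your proposed contradiction at the end compounds the confusion: assuming $\alpha^\omega\neq\gamma^\omega$ already forces $\sigma^\omega_n>0$ for all $n$, hence $J^\omega_n\equiv 0$, and no clash with ``$\sigma^\omega_{n'}=0$ forced by $J^\omega_{n'}>0$'' can materialise. As a smaller point, your one-line gloss on~(2.c) as ``the extremal alignment'' skips the paper's actual work there, which is a case-by-case contradiction on whether the least risk-averse agent is a buyer, a seller, or neither of contracts $W^\omega$.
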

\begin{proof}
    From (\ref{dl/dw}) and (\ref{dl/dj}) we infer that $\alpha^{\omega} = \gamma^{\omega} - \sigma^{\omega}_n$. Note that from complementarity condition for constraint $J^{\omega}_n \geq 0$ we get $\sigma^{\omega}_n \geq 0$ which shows $\alpha^{\omega} \leq \gamma^{\omega}$. Also, note that in equation $\sigma^{\omega}_n = \gamma^{\omega} - \alpha^{\omega}$, right part doesn't depend on $n$, which means that $\sigma^{\omega}_n = \sigma^{\omega}_m, \, \forall m,n \in \mathcal{N}$. 
    \begin{itemize}
                \item[] \textbf{(1.a)} If at least one agent $n'$ trades risk with exogenous agent, that is $J^{\omega}_{n'} > 0$, then from complementarity conditions, $\sigma^{\omega}_n = \sigma^{\omega}_{n'} = 0, \, \forall n \in \mathcal{N}$. It follows directly, that $\alpha^{\omega} = \gamma^{\omega}$.
                
                \item[] \textbf{(1.b)} If $\forall n \in \mathcal{N}, \, J^{\omega}_n = 0$, then $\sigma^{\omega}_n \geq 0$. From (\ref{dl/dj}) we get 
                \begin{equation}
                    \sigma^{\omega} = \gamma^{\omega} + \phi^{\omega} - \tau^{\omega}_n,
                \end{equation}
                where we omit index $n$ in ${\sigma^{\omega}_n}$ for convenience. Note, that it means that $\tau^{\omega}_n$ are aligned across agents. We state that $\Pi_n(\omega) - W^{\omega}_n - \eta_n \leq 0,\, \forall n$, and thus from complementarity condition, $\tau^{\omega}_n = 0,  \forall n \in \mathcal{N}$. Indeed, from the contradiction: if there was an agent $n'$ with $\tau^{\omega}_{n'} \not= 0$, that is $\Pi_{n'}(\omega) - W^{\omega}_{n'} - \eta_{n'} > 0$, then she would buy contract $J^{\omega}_{n'}$ for the price $\gamma^{\omega} J^{\omega}_{n'}$, which is less than the gain from buying this contract: $\frac{p_{\omega}}{1-\chi_{n'}} J^{\omega}_{n'}$ and have a guaranteed decrease in cost function, which contradicts assumption $J^{\omega}_n =0, \forall n \in \mathcal{N}$.
            \end{itemize}
        
        \textbf{(2)} Note that when $\gamma^{\omega} \geq \max_{n} \frac{p_{\omega}}{1-\chi_n}$, it is non-profitable for any prosumer in the community to trade risk with the exogenous agent, which means that $J^{\omega}_n = 0, \,\forall n \in \mathcal{N}$. Similarly to the case (1.b) we get $\sigma^{\omega} = \gamma^{\omega} + \phi^{\omega} - \tau^{\omega}_n$ and $\tau^{\omega}_n = \tau^{\omega}_m,\,\forall n,m, \in \mathcal{N}$.
        
        \textbf{(2.a)} This price corresponds to the case when supply of the risk to trade inside the community exceeds the demand, that is, $\Pi_n(\omega) - W^{\omega}_n - \eta_n \leq 0,\, \forall n$. Similarly to (1.b), from complementarity conditions, $\tau^{\omega}_n = 0, \, \forall n$, and it follows that $\alpha^{\omega} = - \phi^{\omega}$.
            
        \textbf{(2.b)} Assume that in the equilibrium, $\Pi_n(\omega) - W^{\omega}_n - \eta_n \geq 0, \forall n \in \mathcal{N}$ and there exists at least one prosumer with $\Pi_{n'}(\omega) - W^{\omega}_{n'} - \eta_{n'} > 0$ -- that is $\pi^{\omega}_{n'} = 0$ for some $n' \in \mathcal{N}$. Denote $\mathcal{M} \subseteq \mathcal{N}$ - set of such agents. We get that if $n_1, n_2 \in \mathcal{M}$, then $\chi_{n_1} = \chi_{n_2}$ and if $n_1 \in \mathcal{M}, n_2 \not\in \mathcal{M}$, then 
        \begin{equation}
            \pi^{\omega}_{n_2} = p_{\omega} \frac{\chi_{n_2} - \chi_{n_1}}{(1-\chi_{n_2})(1-\chi_{n_1})}.
        \end{equation}
        Note that from complementarity condition $\pi^{\omega}_n \geq 0$. It follows that $\chi_{n_2} \geq \chi_{n_1}$, so, if $n_1 \in \mathcal{M}$ then $\chi_{n_1} = \min_{n} \chi_n$. If $\mathcal{M} = \mathcal{N}$, then $\Pi_n(\omega) - W^{\omega}_n - \eta_n > 0, \, \forall n$ - it means that $\pi^{\omega}_n = 0,\, \forall n \in \mathcal{N}$. It follows that $\frac{p_{\omega}}{1-\chi_{n_1}} = \frac{p_{\omega}}{1-\chi_{n_2}}, \forall n_1, n_2 \in \mathcal{N}$, so we are in homogeneous risk aversion case. 
            
        \textbf{(2.c)} This regime corresponds to the case when $\Pi_n(\omega) - W^{\omega}_n - \eta_n = 0, \, \forall n \in \mathcal{N}$. Define $\mathcal{B} \subset \mathcal{N}$ ($\mathcal{S} \subset \mathcal{N}$) - set of players who buy (sell) contracts in the optimum, that is - $\Pi(\omega)_n - \eta_n > 0$ ($\Pi(\omega)_n - \eta_n < 0$ ). Assume that  $\alpha^{\omega} \not= \frac{p_{\omega}}{1-\min_{n}\chi_n}$, then:
        \begin{itemize}
            \item If prosumer $n'$ is in $ \mathcal{B}$, where $n' := \argmin_n \frac{1}{1-\chi_n}$, it would be unprofitable for her to buy contracts $W^{\omega}_{n'}$, which contradicts with the assumption $\Pi_n(\omega) - W^{\omega}_n - \eta_n = 0, \, \forall n \in \mathcal{N}$. 
            
            \item If $n' \not\in \mathcal{S} \bigcup \mathcal{B}$, which means that $\Pi(\omega)_{n'} = \eta_{n'}$, then she could start selling contracts to decrease her cost $R[\Pi(\omega)_{n'}]$, thus leading to a contradiction.
            
            \item Similarly, the same contradiction arises in $n' \in \mathcal{S}$, when $|S| \geq 2$. 
        \end{itemize}

    \textbf{(3.a)} If there exists an agent $n'$ with $\frac{p_{\omega}}{1-\chi_{n'}} < \gamma^{\omega}$ and $\Pi_{n'}(\omega) - \eta_{n'} > 0$, then it would be profitable for her to buy contract from exogenous agent - that is, $J^{\omega}_{n'} > 0$. Then, proof coincides with the case (1.a).
    
    \textbf{(3.b)} Indeed, if there is no such agent, it means that $J^{\omega}_n = 0,\, \forall n \in \mathcal{N}$, which leads to the same framework as in (2).

\end{proof}

\begin{corollary}\label{subjective probabilities alignment}
    As a follow up of Proposition \ref{chracterization of the risk trading price}, in a complete market, the risk adjusted probabilities of the agents are aligned, i.e., 
    $\tau^{\omega}_n = \tau^{\omega}_{n'}, \forall n, n'\in \mathcal{N}$. 
\end{corollary}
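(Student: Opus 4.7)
The plan is to observe that Corollary \ref{subjective probabilities alignment} is essentially a direct consequence of the stationarity condition \eqref{dl/dw} for the community-traded contracts $W^\omega_n$. The key structural fact I would exploit is that $W^\omega_n$ is a \emph{sign-unconstrained} decision variable (the paper explicitly notes ``$W^{\omega}_n$ could be negative''), so the KKT system yields the pure equation
\begin{equation*}
\alpha^{\omega} - \tau^{\omega}_n + \phi^{\omega} = 0,
\end{equation*}
without any accompanying complementarity slackness term. Rearranging gives $\tau^{\omega}_n = \alpha^{\omega} + \phi^{\omega}$ for every $n \in \mathcal{N}$ and every $\omega \in \Omega$.

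Next I would emphasize that both terms on the right-hand side are agent-independent quantities. The contract price $\alpha^{\omega}$ is by construction a single market-clearing price for the scenario-$\omega$ community contract (a common datum of the risk-trading layer), and $\phi^{\omega}$ is the Lagrange multiplier of the single per-scenario risk-balance constraint \eqref{risk balance condition}, which is a shared coupling constraint whose multiplier is necessarily unique at a variational equilibrium. Consequently the right-hand side is a function of $\omega$ only, and the alignment $\tau^{\omega}_n = \tau^{\omega}_{n'}$ for all $n, n' \in \mathcal{N}$ follows immediately.

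Finally I would cross-check consistency with every regime of Theorem~\ref{chracterization of the risk trading price}, to confirm the corollary is vacuous in no case: in regimes (1.b), (2.a) the theorem already asserts $\tau^{\omega}_n = 0$ for all $n$; in regimes (1.a), (3.a) we have $\alpha^{\omega} = \gamma^{\omega}$, hence $\tau^{\omega}_n = \gamma^{\omega} + \phi^{\omega}$; and in regimes (2.b), (2.c), (3.b) the explicit formulas for $\alpha^{\omega}$ plug into $\tau^{\omega}_n = \alpha^{\omega} + \phi^{\omega}$ to give the same scenario-indexed value for every agent. The only mild subtlety — and the one point I would need to state carefully — is that ``complete market'' here is understood in the sense of Section~\ref{completeness of the market}, so that for each $(n,\omega)$ the contract variable $W^\omega_n$ is genuinely in the optimization, guaranteeing that \eqref{dl/dw} is active as an equality and not a non-existent condition. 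Once that hypothesis is made explicit, there is no real obstacle; the corollary is a one-line consequence of the shared market structure embedded in \eqref{dl/dw} and \eqref{risk balance condition}.
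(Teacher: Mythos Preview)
Your proposal is correct and matches the paper's implicit argument: the paper provides no standalone proof of the corollary, but within the proof of Theorem~\ref{chracterization of the risk trading price} it repeatedly observes (cases (1.b) and (2)) that $\sigma^{\omega} = \gamma^{\omega} + \phi^{\omega} - \tau^{\omega}_n$ forces $\tau^{\omega}_n$ to be aligned across agents, which is exactly your reading of~\eqref{dl/dw}. Your direct route via $\tau^{\omega}_n = \alpha^{\omega} + \phi^{\omega}$ is in fact cleaner than the paper's case-by-case verification; the only minor overreach is invoking variational equilibrium for the uniqueness of $\phi^{\omega}$, since~\eqref{dl/dw} is derived in the centralized formulation where $\phi^{\omega}$ is automatically a single multiplier.
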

\subsection{Decentralized case}
Decentralized setting for risk trading in the peer-to-peer framework is rarely considered in the literature for energy trading. Risk-averse formulation with risk trading contracts and auxiliary variables $u^{\omega}_n$ takes the following form:
\begin{subequations} \label{eq:reformulation_risk_trading_decentralized}
\begin{align}
\min_{\bm{D}, \bm{G}, \bm{q}, \bm{u^{\omega}}, \bm{W^{\omega}}, \bm{J^{\omega}}} \hspace{0.5cm} &  R_n [\Pi_n(\omega)],   \\
s.t. \hspace{0.5cm} &  (D_n, G_n, \bm{q}_n, u_n, W_n, J_n) \in \hat{\mathcal{K}}_n(\boldsymbol{x}_{-n}).
\end{align}
\end{subequations}

\begin{remark}
     Using the notion of Generalized Potential Games (GPG), as defined in \cite{facchinei}, we can show that (\ref{eq:reformulation_risk_trading_decentralized}) is a GPG. Indeed, the objective functions in (\ref{eq:reformulation_risk_trading_decentralized}) do not depend on the other players’ variables, so that the interaction of the players takes places only at the level of feasible sets. Thus, the potential function is given by the sum of the objective functions of all players, that is $P = \sum_{n\in\mathcal{N}} R_n[\Pi_n]$. This means that \eqref{eq:reformulation_risk_trading_decentralized} can be reformulated as an equivalent optimization problem, with $P$ in the objective \cite{facchinei}. Due to computational and communication limitations, it is not always possible to solve a large-scale optimization problem, and it is preferable to decompose the problem so that it can be solved by a distributed algorithmic approach. To that purpose, we can implement Regularized Gauss-Seidel best-response algorithm for the problem \eqref{eq:reformulation_risk_trading_decentralized} which is proven to converge for GPG in \cite{facchinei}. Recent research has focused on distributed algorithms for computing GNEs, so that coordination and information sharing within the agents are limited. Specific results exist for aggregative games \cite{paccagnan}. Recently, it has been proven that distributed learning approaches might also be relevant for games with strictly convex potential functions \cite{tatarenko}. Since none of these settings straightforwardly apply to our problem, we foresee it as future research direction for our work.        
\end{remark}

\section{Numerical Results} \label{numerical}

\subsection{Three node case}

In this section, we first present a toy network with only three nodes indexed by $\{0, 1, 2\}$. As in Proposition \ref{simplified VE coincidence for risk averse} we assume that $a_{mn} = a_{nm} = 0, \,\, \forall m,n \in \mathcal{N}$. The root node 0 has only conventional generation $\Delta G_0 = 0$, with cost $(a_0, b_0) = (4,30)$ and $(\underline{G}_n, \overline{G}_n) = (0,10)$ for $n \in \{0,1,2\}$. $\Delta G_n > 0$ for $n \in \{1, 2\}$. Each node is a consumer with $(\underline{D}, \overline{D}) = (0, 10)$ and generator (RES or conventional), therefore producing energy that can be consumed locally to meet demand $D_n$ and exported to the other nodes to meet the unsatisfied demand.

Regarding the preferences $(a_{nm})_{nm}$, nodes 1 and 2 both prefer to buy local and to RES-based generators. Node 0 is assumed to be indifferent between buying energy from node 1 or node 2. Capacities are also defined larger from the source node 0 $(\kappa_{0n} = 10)$ than between the prosumers nodes $(\kappa_{nm} = 5)$.

In Table \ref{table 1} we depict difference in values of the cost function in different frameworks of the market : risk-neutral (\textbf{RN}); risk averse homogeneous in agents' risk attitudes (\textbf{RA}): $(\chi_0,\chi_1,\chi_2) = (0.3,0.3,0.3)$; risk averse with one agent deviating (\textbf{RA with dev.}) in her risk attitude: $(\chi_0,\chi_1,\chi_2) = (0.3,0.9,0.3)$; risk-heterogeneous (\textbf{RH}) with $(\chi_0,\chi_1,\chi_2) = (0.3,0.4,0.5)$ and risk-heterogeneous with $(\chi_0,\chi_1,\chi_2) = (0.3,0.4,0.5)$ and risk-hedging contracts (\textbf{RH with contr.}).

\begin{table}

\caption{Agents' expected costs for different risk settings.}
\label{table 1}
\begin{center}
\begin{tabular}{|c|c|c|c|c|c|}
\hline
        \textbf{Node}     & \textbf{RN}     & \textbf{RA} & \textbf{RA  dev.} & \textbf{RH} & \textbf{RH contr.}    \\ \hline \hline
        \textbf{0}         & -6.1191        & -10.0359  & -10.0359 & -10.0359 & -14.0763      \\ 
        \textbf{1}         & -22.3625       & -13.3460  & -13.3311 & -13.3437 & -9.8183       \\ 
        \textbf{2}         & -7.2290        &  -7.1371    & -7.1445 & -7.1044 & -6.6550      \\ \hline
        \textbf{SC}        & -35.7107        & -30.5191   & -30.5116 & -30.4842 & -30.5497 \\ \hline
\end{tabular}
\end{center}
\end{table}

As it is depicted in Table \ref{table 1}, introduction of risk attitudes affects market in a negative way, decreasing the profit of the community. Changes in risk attitudes do not affect the profit of node 0 neither in risk-heterogeneous case, neither in the case of one node deviation. Deviation of node 1 from $\chi_1 = 0.3$ to $\chi_1 = 0.9$ in RA with dev. causes small decrease in her utility, but increased utility of node 2, leading to the small decrease of utility of the whole community. Risk-heterogeneity decreases the profit of node 2 with highest risk attitude $\chi_2 = 0.5$ and leads to a small decrease of community expected social cost.

Introduction of contracts leads to the biggest increase of the community's utility comparing to the simple RH or RA cases as it allows node 0 to sell ``surplus of utility'' to node 1 and to a smaller extent to node 2. As it is very costly for node 0 to produce energy, it is more profitable for it to hedge risk of other nodes, when the price $\alpha^{\omega} = \frac{p_{\omega}}{1-\min_{n}\chi_n}$ and $\gamma =\frac{p_{\omega}}{1-\max_{n}\chi_n} $. Both node 1 and node 2 benefit from buiyng contracts from node 0 as the price $\alpha$ is set to be profitable for both.

\begin{table}

\caption{Fincancial contracts trading.}
\label{table 2}
\begin{center}
\begin{tabular}{|c|c|c|c|c|c|}
\hline
\textbf{Node}          &  \multicolumn{3}{|c|}{W}       \\ \hline 
        \textbf{Scenarios}     & $p_0$ & $p_1$ & $p_2$           \\ \hline
        \textbf{0}            & -9.53446 &  -9.4188   & -9.44783    \\ 
        \textbf{1}           & 8.24404 & 8.35619    & 8.38462     \\ 
        \textbf{2}            & 1.29042 &  1.06261   & 1.06321     \\ \hline
        $\alpha^{\omega}$ & 0.47618       & 0.47618 &  0.47618         \\ \hline
        $\tau^{\omega}$    & 0.47618 & 0.0476    & 0.47618   \\ \hline
        $\phi^{\omega}$ & 0.0 & -0.42855 & 0.0 \\\hline
\end{tabular}
\end{center}
\end{table}

Note that $\tau^{\omega} = \frac{p_{\omega}}{1-\min_{n}\chi_n}$ for $\omega \in \{1,2\}$ and it corresponds to regime 2.b described in Proposition \ref{chracterization of the risk trading price}: indeed, for node 0 ($\chi_0 = \min_{n} \chi_n$) dual variables $\pi^{0}_0 = \pi^{2}_0$ equal $0$, which means that node 0 $\in \mathcal{M}$.

\newcommand{\hh}{\hspace{-2pt}} 
\tikzstyle{line} = [draw,thick=2, color=green!50, -latex',text=black,sloped ]

\tikzstyle{linec} = [draw,thick=2, color=red!70, -latex',text=black,sloped ]

\tikzstyle{lineb} = [draw,thick=2, color=blue!70, ->latex',text=black, sloped ]

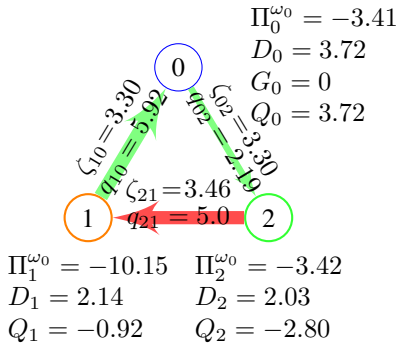
\begin{figure}[!htbp]
\begin{center}
\caption{Risk averse ($SC = -16.99$)}{
\begin{tikzpicture}[scale=1.0, >=latex]
\node [draw,circle,blue,fill=white,text=black] (n0) at (0,2) {0};

\node [draw,thick,circle, orange,text=black] (n1) at (-1.2,0) {1};

\node [draw,thick,circle, green!80,text=black] (n2) at (1.2,0) {2};
 \node [ right= 0.3cm of n0] (n0p)  {\begin{tabular}{l} 
 $\Pi^{\omega_0}_0 = -3.41$  \\
     $ D_0=3.72$ \\
  $ G_0=0$ \\
$ Q_0=3.72  $
\end{tabular} };
\node [below= (-0.cm) of n1] (n1p)  {\begin{tabular}{l} 
 $\Pi^{\omega_0}_1 = -10.15 $  \\
     $ D_1=2.14$ \\
  $ Q_1=-0.92  $
\end{tabular} };
\node [below = (-0.cm) of n2] (n2p)  {\begin{tabular}{l} 
 $\Pi^{\omega_0}_2 = -3.42 $  \\
     $ D_2=2.03$ \\
  $ Q_2=-2.80  $
\end{tabular} };
\path[line,line width=2.48pt](n0) -- node [] {$q_{02}=2.19$} node [above] { $\zeta_{02}\hh=\hh 3.30 $}  (n2) ;
\path[line,line width=5.38pt](n1) -- node [] {$q_{10}=5.92$} node [above] { $\zeta_{10}\hh=\hh 3.30 $}  (n0) ;
\path[linec,line width=5.0pt](n2) -- node [] {$q_{21}=5.0$} node [above] { $\zeta_{21}\hh=\hh 3.46 $}  (n1) ;
\end{tikzpicture}
}
\label{figure 1}
\end{center}
\end{figure}

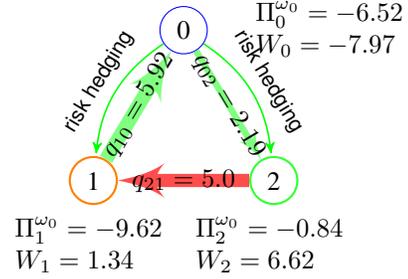
\begin{figure}[!htbp]
\begin{center}
\caption{RA with contracts ($SC = -16.99$)}{
\begin{tikzpicture}[scale=1.0, >=latex, 
            to/.style={->,>=stealth',shorten >=1pt,semithick,font=\sffamily\footnotesize},
  every node/.style={align=center}]

\node [draw,circle,blue,fill=white,text=black] (n0) at (0,2) {0};

\node [draw,thick,circle, orange,text=black] (n1) at (-1.2,0) {1};

\node [draw,thick,circle, green!80,text=black] (n2) at (1.2,0) {2};
 \node [ right= 0.3cm of n0] (n0p)  {\begin{tabular}{l} 
 $\Pi^{\omega_0}_0 = -6.52$  \\
     $ W_0= -7.97$ 
\end{tabular} };
\node [below= (-0.cm) of n1] (n1p)  {\begin{tabular}{l} 
 $\Pi^{\omega_0}_1 = -9.62 $  \\
     $ W_1= 1.34$ 
\end{tabular} };
\node [below = (-0.cm) of n2] (n2p)  {\begin{tabular}{l} 
 $\Pi^{\omega_0}_2 = -0.84 $  \\
     $ W_2=6.62$ 
\end{tabular} };
\path[line,line width=2.48pt](n0) -- node [] {$q_{02}=2.19$}  (n2) ;
\draw[to] [green](n0) to [bend right = 25] node [midway, above, black, sloped] {risk hedging} (n1);
\path[line,line width=5.38pt](n1) -- node [] {$q_{10}=5.92$} (n0) ;
\draw[to] [green] (n0) to [bend left = 25] node [midway, above, black, sloped] {risk hedging} (n2);
\path[linec,line width=5.0pt](n2) -- node [] {$q_{21}=5.0$}  (n1) ;
\end{tikzpicture}
}
\label{figure 2}
\end{center}
\end{figure}

In Table \ref{table 3}  we consider three node network in which now nodes 1 and 2 are prosumers with RES-based generators: $(\underline{G}_1, \overline{G}_1) = (\underline{G}_2, \overline{G}_2) = (0,0)$ for $n \in \{1,2\}$. 

We can observe, that now node 0 becomes more sensitive to the changes of node 1 and node 2 risk attitudes: both deviation of node 1 (RA with dev.: $(\chi_0,\chi_1,\chi_2) = (0.3,0.9,0.3)$) and risk heterogeneity ($(\chi_0,\chi_1,\chi_2) = (0.3,0.4,0.5)$) have higher impact comparing to Table \ref{table 1}.

As illustrated in the table, the introduction of contracts changes the behavior of prosumers as now again with price $\alpha^{\omega}$ set to be $\frac{p_{\omega}}{1-\min_{n}\chi_n}$ it is more profitable for node 0 to sell risk-hedging contracts. On the other side, nodes 1 and 2 tend to hedge their risk coming from RES-based generation for cheap price (from their perception of risk). Also, we may note, that introduction of contracts put \textbf{SC} of \textbf{RH} case to be equal to \textbf{RA case} where agents have equal risk attitude parameters $\chi_n$. 
Figure \ref{figure 1} illustrates the network for the setting (for one scenario $\omega = \omega_0$) described in Table \ref{table 3}: node 1 and node 2 have only RES-based generation. Note that the trade from node 1 to node 2 is done at full capacity. 

In Figure \ref{figure 2}, we show the same network (for one scenario $\omega = \omega_0$), but with contracts included in the agents' optimization problems. Note that we have a similar situation as the one depicted in Table \ref{table 2}: again, it is more profitable for node 1 to sell risk-hedging contracts than to produce energy due to the high conventional generation costs.

\begin{table}

\caption{Node 1 and 2 only RES-based production.}
\label{table 3}
\begin{center}
\begin{tabular}{|c|c|c|c|c|c|}
\hline
 \textbf{Node}     & \textbf{RN}     & \textbf{RA} & \textbf{RH} & \textbf{RA dev.} & \textbf{RH contr.}     \\ \hline \hline
        \textbf{0}         & -3.55       & -3.4207  & -3.4145 & -3.4201 & -6.5273      \\ 
        \textbf{1}         &  -10.18      & -10.1514 &  -10.1545 & -10.1522 & -9.6289       \\ 
        \textbf{2}         & -3.52       & -3.4258    & -3.4210 & -3.4255 & -0.8417     \\ \hline
        \textbf{SC}     & -17.25        & -16.9980   &-16.9901 & -16.9979  & -16.9980      \\ \hline
\end{tabular}
\end{center}
\end{table}

\subsection{14-bus network}
In this example, we consider the IEEE 14-bus network system as in \cite{le cadre}. Each pair of busses is able to trade with her neighboring busses, up to the capacity of the edge linking the pair of busses. We consider two cases: \textbf{(1)} $c_{nm} = c_{mn} = 1$ and \textbf{(2)} $c_{nm}$ 

are taken as in \cite{le cadre} - $b_{nm}$ are chosen uniformly in $[0,1]$ if $n \not= 1, m \not= 1$, and with assumption that agents have a preference for local trades so the price with the grid connection bus $c_{n1}$ chosen uniformly in $[1, 2]$. The grid connection bus has no preferences so that $c_{1n} = 1$ for
each $n$ neighboring bus 1.

\begin{figure}
\centering
\begin{minipage}{.45\linewidth}
  \includegraphics[width=\linewidth]{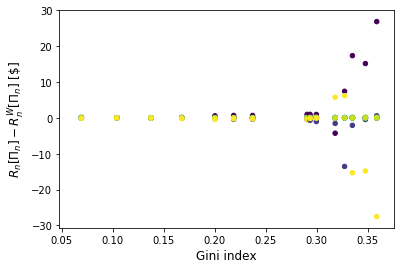}
  \caption{Uniform prices}
  \label{figure 3}
\end{minipage}
\hspace{.05\linewidth}
\begin{minipage}{.45\linewidth}
  \includegraphics[width=\linewidth]{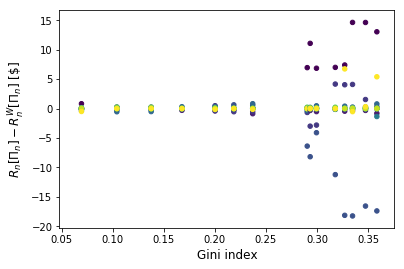}
  \caption{Heterogeneous prices}
  \label{figure 4}
\end{minipage}
\end{figure}
As depicted on the Figures \ref{figure 3} and \ref{figure 4}, increasing of heterogeneity in risk attitudes of the agents leads to more active trading of financial contracts. Heterogeneity in the trading costs puts financial restrictions on energy trading, that lead to activation of risk-hedging on the lower values of Gini index. As in the case of three nodes, node 0 actively sells contracts, as it is more profitable for her to sell contracts than to produce energy.

\section{CONCLUSION}

We formulate two market design (centralized used as benchmark, and peer-to-peer) under risk neutral and risk averse settings. We analyse the market equilibria for these two settings and characterize the impact of strategic behaviors of the agents and risk attitude heterogeneity on the market equilibria, relying on Generalized Nash Equilibrium and Variational Equilibrium as solution concepts. We consider the inclusion of risk hedging mechanisms in the incomplete energy trading market, in the form of financial contracts, and provide closed form characterizations of the risk-adjusted probabilities under different market regimes. 

As next step, we will focus on the development of distributed learning algorithms for risk trading mechanism, that minimize the information exchanges among the prosumers. Some results already exist for aggregative games or under strict convexity of the potential function, which do not hold here. Another research direction would be to consider the strategic behavior of an aggregator located at the root node.

\end{document}